\newcommand{\be}{\begin{equation}}
\newcommand{\ee}{\end{equation}}
\def\C{{\mathbb C}}
\def\R{{\mathbb R}}
\def\N{{\mathbb N}}
\def\Sch{{\mathcal S}}
\def\O{\mathcal O}
\def\({\left(}
\def\){\right)}
\def\<{\left\langle}
\def\>{\right\rangle}
\def\le{\leqslant}
\def\ge{\geqslant}
\def\Eq#1#2{\mathop{\sim}\limits_{#1\rightarrow#2}}
\def\Tend#1#2{\mathop{\longrightarrow}\limits_{#1\rightarrow#2}}
\def\d{{\partial}}
\def\eps{\varepsilon}
\def\U{{\mathcal U}}
\DeclareMathOperator{\RE}{Re}
\theoremstyle{plain}
\newtheorem{theorem}{Theorem}[section]
\newtheorem{lemma}[theorem]{Lemma}
\newtheorem{proposition}[theorem]{Proposition}
\newtheorem{hyp}[theorem]{Assumption}
\theoremstyle{definition}
\newtheorem{remark}[theorem]{Remark}
\newtheorem*{remark*}{Remark}
\newtheorem{example}[theorem]{Example}
\numberwithin{equation}{section}
\begin{document}

\title[Semiclassical wave packets in periodic potentials]
{Semiclassical wave packet dynamics in 
Schr\"odinger equations with
  periodic potentials}

\author[R. Carles]{R\'emi Carles}
\address[R. Carles]{Univ. Montpellier~2\\Math\'ematiques \\
  CC~051\\F-34095 Montpellier} 
\address{CNRS, UMR 5149\\  F-34095 Montpellier\\ France}
\email{Remi.Carles@math.cnrs.fr}
\author[C. Sparber]{Christof Sparber}
\address[C. Sparber]
{Department of Mathematics, Statistics, and Computer Science\\
University of Illinois at Chicago\\
851 South Morgan Street
Chicago, Illinois 60607, USA}
\email{sparber@uic.edu}

\begin{abstract}
We consider semiclassically scaled Schr\"odinger equations with an
external potential and a  highly
oscillatory periodic potential. We construct asymptotic solutions in
the form of  semiclassical wave packets. These solutions are
concentrated (both, in space and in frequency) around the effective
semiclassical phase-space flow obtained by Peierls substitution, and involve a 
slowly varying envelope whose dynamics is governed by a homogenized
Schr\"odinger equation with 
time-dependent effective mass. The corresponding adiabatic decoupling
of the slow and 
fast degrees of freedom is shown to be valid
up to Ehrenfest time scales.  
\end{abstract}

\subjclass[2000]{81Q20, 35A35, 35Q40, 81Q05}
\keywords{Schr\"odinger equation, semiclassical wave packet, periodic
  potential, Bloch band, Ehrenfest time}

\thanks{R.~Carles is supported by the French ANR project
  R. A. S. (ANR-08-JCJC-0124-01). C.~Sparber has been supported by the Royal
Society via his University Research Fellowship}
\maketitle

\section{Introduction} 
\label{sec:intro}

\subsection{General setting}
We consider the following semiclassically scaled 
Schr\"odinger equation: 
\begin{equation}\label{eq:NLS0}
 \left\{
\begin{aligned}
     i\eps\d_t \psi^\eps +\frac{\eps^2}{2}\Delta \psi^\eps&=V_\Gamma
     \left(\frac{x}{\eps}\right)\psi^\eps+ V( x)\psi^\eps, \
(t,x)\in
    {\R}\times {\R}^d, \\
     \psi^\eps_{\mid t=0} &= \psi^\eps_{0},
  \end{aligned}
\right.
\end{equation}
with $d\ge 1$, the spatial dimension, and $\psi^\eps = \psi^\eps(t,x)\in \C$. Here, we already have rescaled all physical parameters such that only 
one semiclassical parameter $\eps >0$  (i.e. the scaled Planck's constant) remains. In the following we shall be interested in the asymptotic 
description of $\psi^\eps(t,x)$ for $\eps \ll 1$.
To this end, the potential
$ V_\Gamma(y)\in \R$ is assumed to be smooth and 
\emph{periodic} with respect to some \emph{regular lattice} $\Gamma \simeq
\mathbb Z^d$, generated by a given basis $\{\eta_1,\dots,\eta_d\}$, $\eta_\ell
\in \R^d$, i.e.
\begin{equation}
\label{eq:Vper}
V_{\Gamma}(y + \gamma) = V_{\Gamma}(y), \quad \forall \, y \in \R^d,
\gamma \in \Gamma
\end{equation}
where 
\begin{equation*}
\Gamma \equiv  
\Big\{\gamma=\sum_{\ell=1}^d \gamma_\ell
  \eta_\ell\in \R^d:  \: \gamma_\ell\in \mathbb Z \Big \}. 
\end{equation*} 
In addition, the slowly-varying potential $V$ is assumed to satisfy the following:
\begin{hyp}\label{hyp:V}
 $ 
 V\in C^3( \R^d;\R)$ and  $\d^\gamma V\in
 L^\infty\( \R^d\)$, for $ |\gamma|= 2,3$.  
\end{hyp}
Note that this implies that $V(x)$ grows at most quadratically at infinity.  
Equation \ref{eq:NLS0} describes the dynamics of quantum particles in a periodic lattice-potential $V_\Gamma$ 
under the influence of an external, \emph{slowly varying} driving force $F= - \nabla V(x)$.  
A typical application arises in solid state physics where
\eqref{eq:NLS0} describes the time-evolution of electrons moving in a
crystalline lattice (generated by the ionic cores).  
The asymptotics of \eqref{eq:NLS0} as $\eps \to 0_+$ is a natural
two-scale problem which is well-studied in the physics and mathematics
literature. Early mathematical results are based on time-dependent 
WKB type expansions \cite{BeLiPa, GuRaTr88, Spohn96} (see also
\cite{CaMaSp} for a more recent application in the nonlinear case),
which, however, suffer from the appearance of caustics and are thus
only valid for small times.  
In order to overcome this problem, other methods based on, e.g.,
Gaussian beams \cite{DiGuRa06}, or Wigner measures \cite{Ge91, GMMP}, have been
developed. These approaches yield an asymptotic description for time-scales of
order $\O(1)$ (i.e. beyond caustics). More recently, so-called space-adiabatic perturbation 
theory has been used (together with Weyl pseudo-differential calculus) to derive an effective Hamiltonian, 
governing the dynamics of particles in periodic potentials $V_\Gamma$ under the additional influence of slowly 
varying perturbations \cite{HoSpTe01, Te03}. The semi-classical asymptotics of this effective model is then 
obtained in a second step, invoking an Egorov-type theorem.

On the other hand, it is well known that in the case \emph{without periodic potential}, 
semiclassical approximations which are valid up
to \emph{Ehrenfest time} $t\sim \mathcal O(\ln 1/\eps)$ can be
constructed in a rather simple way.  
The corresponding asymptotic method is based on propagating \emph{semiclassical wave
  packets}, or coherent states, i.e. approximate solutions of \eqref{eq:NLS0} which are
sufficiently concentrated \emph{  
in space and in frequency} around the classical Hamiltonian phase-space flow. 
More precisely, one considers
\begin{equation}\label{eq:approx}
\psi^\eps(t,x) \approx   \eps^{-d/4} u 
\left(t,\frac{x-q(t)}{\sqrt\eps}\right)  e^{i \(S(t)+(x-q(t))\cdot
  p(t)\)/ \eps}, 
\end{equation}  
where $(q(t), p(t))\in C^3\(\R;\R^{2d}\)$ satisfies Hamilton's
equation of motion 
\begin{equation}\label{eq:classflow}
\left\{  
\begin{aligned}
  &\dot q(t)=p(t),\quad q(0)=q_0,\\
&\dot p(t)= -\nabla_xV\(q(t)\),\quad p(0)=p_0,
     \end{aligned}
\right.
 \end{equation}
and the purely time-dependent function $S(t)$ denotes the classical action (see \S\ref{sec:main} below). 
The right hand side of \eqref{eq:approx} corresponds to a wave
function which is equally localized 
in space and in frequency (at scale $\sqrt\eps$), so the uncertainty
principle is optimized. In other words, 
the three quantities
\begin{equation*}
  \|\psi^\eps(t)\|_{L^2(\R^d)},\quad \left\|\(\sqrt\eps
    \nabla-i\frac{p(t)}{\sqrt\eps}\) 
  \psi^\eps(t)\right\|_{L^2(\R^d)},\quad \text{and }
\quad \left\|\frac{x-q(t)}{\sqrt\eps}
  \psi^\eps(t)\right\|_{L^2(\R^d)}
\end{equation*}
have the same order of magnitude, $\O(1)$, as $\eps\to 0$. 
The basic idea for this type of asymptotic method can be found in the
classical works of \cite{Hag80, Lit86} (see also \cite{BiRo01, Pau97}
for a broader introduction). It has been developed further   
in, e.g., \cite{CoRo97,CoRo06, Rob10, Rou-p, SwRo09} and in
addition also proved to be applicable in the case of nonlinear
Schr\"odinger equations \cite{CaFe11} (a situation in which the use of 
Wigner measures of space-adiabatic perturbation theory fails). Asymptotic results 
based on such semiclassical wave packets also have the advantage of giving a 
rather clear connection between quantum mechanics and classical
particle dynamics  
and are thus frequently used in numerical simulations (see
e.g. \cite{FaGrLi09}). 

Ehrenfest time is the largest time up to which the wave packet
approximation is valid, in general. Without any extra geometric
assumption, the coherent structure may be lost at some time of order $C
\ln 1/\eps$, if $C$ is too large. See
e.g. \cite{BoRo02,DeRo03,Pau09,Rob02,Rob07} and references 
therein. 

Interestingly enough, though, it seems that so far this method has not been
extended to include also highly oscillatory periodic potentials
$V_\Gamma\left(\frac{x}{\eps}\right)$, and it will be the main task of
this work to do so. 
To this end, it will be necessary to understand the influence of
$V_\Gamma\left(\frac{x}{\eps}\right)$ on the dispersive properties of
the solution $\psi^\eps(t,x)$. In particular, having in mind the
results quoted above,  
one expects that in this case the usual kinetic energy of a particle
$E= \frac{1}{2} |k|^2$ has to be replaced by $E_m(k)$, i.e. the energy
of the $m$-th Bloch band associated to $V_\Gamma$.  
In physics this is known under the name \emph{Peierls substitution}. 
We shall show that under the additional influence of a slowly varying potential $V(x)$, 
this procedure is in fact asymptotically correct (i.e. for $\eps \ll
1$) up to Ehrenfest time, provided the initial data $\psi^\eps_0$ is
sufficiently concentrated around $(q_0,p_0)\in \R^{2d}$.

\begin{remark} Indeed, we could also allow for time-dependent external
  potentials $V(t,x)\in \R$ measurable in time, smooth in $x$, and
  satisfying  
  \begin{equation*}
  \d_x^\gamma V\in
 L^\infty\(\R_t\times \R_x^d\),\quad |\gamma|= 2,3.  
  \end{equation*}
Under this assumptions, it is straightforward to adapt the analysis
given below. For the sake of notation, we shall not do so here, but
rather leave the details to the reader. 
\end{remark}

\subsection{Bloch and semiclassical wave packets} \label{sec:bloch}

In order to state our result more precisely, we first recall some
well-known results on the spectral theory for periodic Schr\"odinger
operators, cf. \cite{ReSi78, Wil78}: 
\[
H_{\rm per}:= -\frac{1}{2}\Delta_y + V_\Gamma(y).
\]
Denote by $Y\subset \Gamma$ the (centered) fundamental domain of the
lattice $\Gamma$, equipped with periodic boundary conditions, i.e. $Y\simeq \mathbb T^d$. Similarly, we denote by $Y^*\simeq \mathbb T^d$ the fundamental domain
of the corresponding dual lattice. The latter is usually referred to as \emph{the Brillouin
  zone}.  
Bloch--Floquet theory asserts that $H_{\rm per}$ admits a
fiber-decomposition 
\[
H_{\rm per} = \frac{1}{|Y^*|} \int_{Y^*} H_\Gamma(k) \, dk,
\]
where for $k\in Y^*$, we denote 
\begin{equation*}
H_{\Gamma}(k)=  \frac{1}{2} \, \left(- i \nabla_y + k \right)^2+
V_{\Gamma}\left (y\right) .
\end{equation*}
It therefore suffices to consider the following spectral problem on $Y$:  
\begin{equation}\label{eq:bloch}
H_{\Gamma}(k) \chi_m (\cdot,k) = \, E_m (k)\chi_m (\cdot, k), \quad  k
  \in Y^*,\ m \in \N, 
\end{equation}
where $E_m(k)\in \R$ and $\chi_m(y,k)$, respectively, denote an
eigenvalue/eigenvector pair of $H_{\Gamma}(k)$, parametrized by
$k\in Y^*$, the so-called {\it crystal momentum.}  
These eigenvalues can be ordered increasingly, such that
\[
E_1(k)\le\ldots\le E_m(k)\le E_{m+1}(k)\le \dots,
\]
where each eigenvalue is repeated according to its multiplicity (which
is known to be finite).  
This implies that
\begin{equation*}
\text{spec} \, H_{\rm per} = \bigcup_{m \in \N} \{
E_m(k) \ ;\ k \in Y^*\} \subset \R,  
\end{equation*}
where $\{E_m (k); \, k \in Y^*\}$ is
called the $m$-th \emph{energy band} (or Bloch band).  The associated
eigenfunctions $\chi_m(y, k)$ are $\Gamma^*$--periodic w.r.t. $k$ and form a
complete orthonormal basis of $L^2(Y)$. Moreover, the functions $\chi_m(\cdot, k) \in H^2(Y)$ are 
known to be real-analytic with respect to $k$ on $Y^*\setminus \Omega$, where 
$\Omega$ is a set of Lebesgue measure zero (the set of band crossings).

Next, we consider for some $m\in \N$ the corresponding
\emph{semi-classical band Hamiltonian}, obtained by Peierl's
substitution, i.e. 
\begin{equation*}
  h_m^{\rm sc}(k,x)= E_m(k) +V(x),\quad (k,x)\in Y^*\times\R^d,
\end{equation*}
and denote the semiclassical phase space trajectories associated to
$h_m^{\rm sc}$ by 
\begin{equation}\label{eq:traj}
\left\{  
\begin{aligned}
  &\dot q(t)=\nabla_k E_m\(p(t)\),\quad q(0)=q_0,\\
&\dot p(t)= -\nabla_xV\(q(t)\),\quad p(0)=p_0. 
     \end{aligned}
\right.
 \end{equation} 
 This system is the analog of \eqref{eq:classflow} in the presence of
 an additional periodic potential. 
\begin{example}[No external potential]\label{ex:noV}
In the case $V(x)=0$, we simply have
\begin{equation}\label{eq:shift}
 p(t) = p_0, \quad q(t) = q_0 + t \nabla_k E_m(p_0) ,
 \end{equation}
 that is, a shift with constant speed $\omega = \nabla E_m(p_0)$.
 \end{example}

In order to make sure that the system  \eqref{eq:traj} is
well-defined, we shall from now on impose the following condition on
$E_m(k)$. 
 
\begin{hyp}\label{hyp:E}
 We assume that $E_m(p(t))$ is a simple eigenvalue, uniformly for all
 $t\in \R$, i.e. there exists a $\delta>0$, such that 
 \[
|E_m(p(t))- E_n(k)| \ge \delta, \quad \forall n\not=m, t\in \R, k\in Y^*.
 \]
 \end{hyp}
It is known that if $E_m(k)$ is simple, it is infinitely
differentiable and thus the right hand side of \eqref{eq:traj} is well
defined. Under Assumption~\ref{eq:classflow}, we consequently obtain a
smooth semi-classical flow $(q_0, p_0)\mapsto (q(t), p(t))$, for all
$t\in \R$.  
In addition, one can choose $\chi_m(y,k)$ to be $\Gamma$--periodic
with respect to $y$ and  
such that $(y,t)\mapsto \chi_m(y , p(t))$ is bounded together with all
its derivatives. 

\begin{example}
By compactness of $Y^*$, Assumption \ref{hyp:E} is satisfied in either
of the following two cases: 
\begin{itemize}
\item[(i)] If $E_m(k)$ is a simple eigenvalue for all $k \in Y^*$. In
  particular, in $d=1$ it is known that every $E_m(k)$ is simple,
  except possibly at $k=0$ or at the edge of the Brillouin zone. 
\item[(ii)] If $V(x)=0$ and $E_m(k)$ is simple in a neighborhood of
  $k=p_0$ (which is sufficient in view of Example \ref{ex:noV}). 
\end{itemize}
\end{example}

\subsection{Main result}
\label{sec:main}
 With the above definitions at hand, we are now able to state our main
 mathematical result. To this end, we first define a
 \emph{semiclassical wave packet} in the $m$-th Bloch band (satisfying
 Assumption \ref{hyp:E}) by 
\begin{equation}
  \label{eq:semibl}
  \varphi^\eps(t,x)=\eps^{-d/4} u 
\left(t,\frac{x-q(t)}{\sqrt\eps}\right) \chi_m\left(\frac{x}{\eps},
  p(t) \right) e^{i \phi_m(t,x) / \eps}
\end{equation}
with  $q(t), p(t)$ given by system \eqref{eq:traj} and $u(t,z)\in
\C$, a smooth slowly varying envelope which will determined by an
envelope equation yet to be derived (see below).  
In addition, the $\eps$-oscillatory phase is 
\begin{equation}\label{eq:phi}
\phi_m(t,x) = S_m(t)+p(t)\cdot (x-q(t)),
\end{equation}
where $S_m(t)\in \R$ is the (purely time-dependent) semi-classical action
\begin{equation}\label{eq:action}
S_m(t) = \int_0^t  L_m(p(s),q(s))  \, ds,
\end{equation}
with $L_m$ denoting the Lagrangian associated to the effective
Hamiltonian $h_m^{\rm sc}$, i.e. 
\begin{equation}\label{eq:lagrange}
  \begin{aligned}
    L_m(p(s),q(s)) &= p(s) \cdot \dot q(s)  - h_m^{\rm
  sc}\(p(s),q(s)\)\\
&= p(s) \cdot \nabla E_m(p(s)) -
h_m^{\rm sc}\(p(s),q(s)\) ,
\end{aligned}
\end{equation}
in view of  \eqref{eq:traj}.
\begin{remark} Note that this is nothing but the Legendre
  transform of the effective Hamiltonian $h_m^{\rm sc}$. As in
  classical mechanics, one associates to a given Hamiltonian $H(p,q)$
  a Lagrangian via $L(p,q) =   p \cdot \dot q- H(p,q)$.
\end{remark}

The function $\varphi^\eps $ given by \eqref{eq:semibl} generalizes
the usual class of semiclassical wave packets considered in
e.g. \cite{Hag80, Lit86}. Note that in contrast to  
two-scale WKB approximation considered in  \cite{BeLiPa, GuRaTr88,
  Spohn96}, it involves an {\it additional scale} of the order
$1/\sqrt{\eps}$, the 
scale of concentration of the amplitude $u$. 
In addition, \eqref{eq:semibl} does {\it not} suffer from the 
appearance of caustics.  
Nevertheless, in comparison to the highly oscillatory Bloch function
$\chi_m$, the amplitude is still slowly varying and thus we can expect
an adiabatic decoupling between the slow and fast scales  
to hold on (long) macroscopic time-scales. 
Indeed, we shall prove the following result:

\begin{theorem}\label{th:main} 
Let $V_\Gamma $ be smooth and $V$ satisfy Assumptions \ref{hyp:V}. In
addition, let Assumption~\ref{hyp:E} hold  
and the initial data be given by 
\[
\psi^\eps_0(x) = \eps^{-d/4} u_0 
\left(\frac{x-q_0}{\sqrt\eps}\right) \chi_m\left(\frac{x}{\eps},
  p_0 \right) e^{i p_0 \cdot (x-q_0)/ \eps}, 
\]
with $q_0,p_0 \in \R^d$ and some given profile $u_0 \in \mathcal
S(\R^d)$. Then there 
exists $C>0$ such that the solution of \eqref{eq:NLS0} can be approximated by
\[
\| \psi^\eps (t) - \varphi^\eps(t) \|_{L^2(\R^d)} \le C \sqrt{\eps} e^{C t}.
\]
Here, $\varphi^\eps$ is given by \eqref{eq:semibl} with
\[
u(t,z) = v(t,z) \exp\(\int_0^t  \beta(\tau) d\tau\),
\]
where $\beta(t)\in i \R$ is the so-called Berry phase term
\[
\beta(t): =  \< \chi_m( p(t)), \nabla_k \chi_m(p(t)) \>_{L^2(Y)} \cdot
\nabla V(q(t)), 
\]
and $v \in
C(\R;\Sch(\R^d))$ satisfies the following homogenized Schr\"odinger
equation 
\begin{equation}\label{eq:v}
  i\d_t v + \frac{1}{2}{\rm div}_z\(\nabla^2_k E_m\(p(t)\)\cdot
  \nabla_z\) v = \frac{1}{2} \< z, \nabla_x^2V\(q(t)\) z \> v, \quad
  v_{\mid t=0} = u_{0}.
\end{equation} 
In particular there exists $C_0>0$ so that
\[
\sup_{0\le t\le C_0 \ln \frac{1}{\eps}}\| \psi^\eps (t) -
\varphi^\eps(t) \|_{L^2(\R^d)} \Tend \eps 0 0. 
\] 
\end{theorem}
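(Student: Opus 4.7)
The plan is a two-scale WKB expansion combined with a linear $L^2$-stability estimate. I would work with an enriched ansatz
\[
\tilde\varphi^\eps(t,x) = \eps^{-d/4}\bigl[u(t,z)\chi_m(y,p(t)) + \sqrt\eps\, U_1(t,z,y) + \eps\, U_2(t,z,y)\bigr]e^{i\phi_m(t,x)/\eps},
\]
with $z=(x-q(t))/\sqrt\eps$ and $y=x/\eps$, where $U_1, U_2$ are correctors to be determined so that, after Taylor expanding $V(x) = V(q) + \nabla V(q)\cdot(x-q) + \tfrac12\<x-q,\nabla_x^2 V(q)(x-q)\> + R(x)$ with $R(x)=O(|x-q|^3)$, the formal residual vanishes to order $\eps$. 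The choice of $S_m$, $q$, $p$, $\phi_m$ produces the identity $\d_t \phi_m + V(x) = -E_m(p) + \tfrac12\<x-q,\nabla_x^2 V(q)(x-q)\> + R(x)$ together with $\nabla_x\phi_m=p$, $\Delta_x\phi_m=0$, so after factoring out $e^{i\phi_m/\eps}$ the semiclassical operator becomes an expression in $(t,z,y)$ that I would organize in powers of $\sqrt\eps$.

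At order $\eps^0$, the Bloch eigenvalue equation $H_\Gamma(p)\chi_m = E_m(p)\chi_m$ produces exact cancellation. At order $\sqrt\eps$, the residual reduces to $i\nabla_z u\cdot[(p-\dot q)\chi_m - i\nabla_y\chi_m]$, and the Feynman--Hellmann identity $\<\chi_m,(p-i\nabla_y)\chi_m\>_{L^2(Y)} = \nabla_k E_m(p) = \dot q$ shows this lies in $\{\chi_m\}^\perp$. Differentiating the Bloch equation in $k$ yields
\[
(H_\Gamma(p)-E_m(p))\d_{p_j}\chi_m = -\bigl[(p-\dot q)_j\chi_m - i\d_{y_j}\chi_m\bigr],
\]
so $U_1 = -i\nabla_z u\cdot\nabla_k\chi_m$ (adjusted by a multiple of $\chi_m$ to enforce $\<\chi_m,U_1\>_{L^2(Y)}=0$) cancels this residual exactly. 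Assumption~\ref{hyp:E} ensures $H_\Gamma(p)-E_m(p)$ is invertible on $\{\chi_m\}^\perp$ uniformly in $t$, so $U_1$ inherits the regularity and decay of $u$.

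At order $\eps$, projecting the residual onto $\chi_m$: the Berry contribution comes from $\d_t\chi_m(y,p(t)) = \dot p\cdot\nabla_k\chi_m = -\nabla V(q)\cdot\nabla_k\chi_m$, while a second differentiation of the Bloch identity shows that the cross term $\nabla_z\cdot\nabla_y U_1$ reconstructs the effective-mass tensor $\nabla_k^2 E_m(p)$. This yields the transport equation $i\d_t u + \tfrac12\diver_z(\nabla_k^2 E_m(p)\nabla_z u) = \tfrac12\<z,\nabla_x^2V(q)z\> u + i\beta(t)u$, and substituting $u(t,z) = v(t,z)\exp(\int_0^t\beta(\tau)d\tau)$ reduces this to \eqref{eq:v} for $v$. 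The component orthogonal to $\chi_m$ is killed by solving for $U_2$ via another application of $(H_\Gamma(p)-E_m(p))^{-1}$ on $\{\chi_m\}^\perp$.

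With these choices the bracket residual is $O(\eps^{3/2})$ in the variables $(t,z,y)$; unfolding via $dx = \eps^{d/2}dz$ and using the $L^\infty$ bounds on $\chi_m$ and its derivatives yields $\|r^\eps(t)\|_{L^2_x} \le C\eps^{3/2}\mathcal N(v(t))$ for some finite sum $\mathcal N$ of Schwartz seminorms of $v(t,\cdot)$. Since \eqref{eq:v} has at most time-dependent quadratic coefficients, its propagator preserves $\Sch(\R^d)$ and all such seminorms grow at most exponentially in $t$, so $\|r^\eps(t)\|_{L^2_x} \le C\eps^{3/2}e^{Ct}$. The standard $L^2$ energy estimate applied to $\psi^\eps - \tilde\varphi^\eps$, which solves \eqref{eq:NLS0} with zero initial data and source $-r^\eps$, then gives $\|\psi^\eps(t) - \tilde\varphi^\eps(t)\|_{L^2} \le \eps^{-1}\int_0^t\|r^\eps(s)\|_{L^2}ds \le C\sqrt\eps\, e^{Ct}$, and combining with $\|\tilde\varphi^\eps(t)-\varphi^\eps(t)\|_{L^2} = O(\sqrt\eps)$ finishes the theorem, while choosing $C_0<1/C$ yields the Ehrenfest corollary. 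The hardest part is the order-$\eps$ projection: one must verify that the $\nabla_z\cdot\nabla_y U_1$ contribution precisely reconstructs the symmetric tensor $\nabla_k^2 E_m(p)$ rather than some other coupling, which requires the explicit form of $U_1$ and a second differentiation of the Bloch identity, together with the uniform spectral gap from Assumption~\ref{hyp:E}.
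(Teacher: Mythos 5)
Your proposal follows essentially the same route as the paper: the two-scale ansatz with correctors $U_1=-i\nabla_k\chi_m\cdot\nabla_z u$ and $U_2$ obtained by inverting $H_\Gamma(p)-E_m(p)$ on $\{\chi_m\}^\perp$, the Fredholm solvability condition at order $\eps$ producing the homogenized equation with Berry phase, exponential-in-time control of the envelope, and the $L^2$ energy estimate with the $\eps^{-1}$ loss giving $C\sqrt\eps\,e^{Ct}$. One small slip: since $U_1,U_2$ do not vanish at $t=0$, the corrected ansatz $\tilde\varphi^\eps$ does \emph{not} agree with $\psi^\eps_0$ initially (the mismatch is of order $\sqrt\eps$, not zero), but this is harmless because the energy estimate absorbs an $\mathcal O(\sqrt\eps)$ initial discrepancy into the same bound, exactly as in the paper.
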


\begin{remark} In fact it is possible to prove the same result under
  less restrictive regularity assumptions on $u_0$ and $V_\Gamma$.  
Indeed, Proposition \ref{prop:stab} shows that it is sufficient to
require that $u_0$ belongs to a certain weighted Sobolev space.  
Concerning the periodic potential, it is possible to lower the
regularity considerably, depending on the dimension. For example, in
$d=3$ it is  
sufficient to 
assume $V_\Gamma$ to be infinitesimally bounded with respect to 
$-\Delta$. This implies $\chi_m(\cdot, k) \in H^2(\R^3)
\hookrightarrow L^\infty(\R^3)$, which, together with several  
density arguments (to be invoked at different stages of the formal
expansion), is enough to justify the analysis given below. 
\end{remark}

Theorem \ref{th:main} provides an approximate description of the solution to
\eqref{eq:NLS0} up to Ehrenfest time and can be seen as the analog of
the results given in   
\cite{Hag80, Lit86, CoRo97,CoRo06, HaJo01, Rob10, Rou-p, SwRo09} where
the case of slowly varying potentials $V(x)$ is considered.  The proof
does not rely on the use of  pseudo-differential calculus or space space-adiabatic perturbation 
theory and can thus be considered to be considerably simpler from a mathematical point of view.
In fact, our approach is similar to the one given in 
\cite{HaJo01}, which derives an analogous result for the so-called 
Born-Oppenheimer approximation of molecular dynamics. Note however, that we 
allow for more general initial amplitudes, not necessarily Gaussian. 
Indeed, in the special case where the initial envelope $u_0$ is a Gaussian,
then its evolution $u$ remains Gaussian, and can be completely
characterized; see \S\ref{sec:gaussian}. 
Also note that in contrast to the closely related method of Gaussian beams
presented in, e.g., \cite{DiGuRa06}, we do not need to include
complex-valued phases and in addition, obtain an  
approximation valid for longer times.

The Berry phase term is an example for so-called \emph{geometric
  phases} in quantum mechanics. It is a well known feature of  
semiclassical approximation in periodic potentials, see, e.g.,
\cite{PaSpTe03} for more details and a geometric interpretation. 
The homogenized Schr\"odinger equation features a rather
unusual dispersive behavior described  
by a time-dependent effective mass tensor $M(t)=\nabla^2_k E_m\(p(t)\)$, i.e. 
the Hessian of $E_m(k)$ evaluated at $k=p(t)$. 
To our knowledge, Theorem \ref{th:main} is the first result in
which a Schr\"odinger type equation with \emph{time-dependent} effective
mass has been rigorously derived (see also the discussion in Remark
\ref{rem:effective}).

\begin{remark} Let us also mention that the same class of initial data
  has been considered in \cite{AlPo06} for a Schr\"odinger equation
  with \emph{locally} periodic potential $V_\Gamma(x, y)$  
and corresponding $x$-dependent Bloch bands $E_m(k;x)$. 
In this work, the authors derive a homogenized Schr\"odinger equation, provided 
that $\psi_0^\eps$ is concentrated around a \emph{stationary point}
point $x_0, p_0$ of the semiclassical phase space flow, i.e.  
\[
\nabla_k E_m(p_0; q_0) = \nabla_x E_m(p_0 ;q_0) = 0.
\]
This implies $q(t)=q_0$ and $p(t)=p_0$, for all $t\in \R$, yielding (at least asymptotically) a 
localization of the wave function.  
We observe the same phenomenon in our case 
under the condition $V(x)= 0$ and $\nabla_k E_m(k)=0$
(see Example \ref{ex:noV}). 
\end{remark}

This work is now organized as follows: In the next section, we shall
formally derive an approximate solution to \eqref{eq:NLS0} by means of
a (formal) multi-scale expansion.  
This expansion yields a system of three linear equations, which we
shall solve in Section \ref{sec:alg}. 
In particular, we shall obtain from it the homogenized Schr\"odinger
equation. The corresponding Cauchy problem is then analyzed in Section
\ref{sec:profile}, where we also include  
a brief discussion on the particularly important case of Gaussian
profiles (yielding a direct connection to \cite{Hag80}). A rigorous  
stability result for our approximation, up to Ehrenfest time, is then
given in Section \ref{sec:stab}. 

\begin{remark} We expect that our results can be generalized to the case
of (weakly) nonlinear Schr\"odinger equations (as considered in \cite{ CaFe11, CaMaSp}). 
This will be the aim of a future work. 
\end{remark}

\section{Formal derivation of an approximate
  solution} \label{sec:formal}  

\subsection{Reduction through exact computations}
\label{sec:exact}

We seek the solution $\psi^\eps$ of \eqref{eq:NLS0} in the following form
\begin{equation}
  \label{eq:change}
  \psi^\eps(t,x)= \eps^{-d/4}\,
  \U^\eps\(t,\frac{x-q(t)}{\sqrt\eps},\frac{x}{\eps}\) e^{i \phi_m(t,x) /
    \eps}, 
\end{equation}
where the phase $\phi(t,x)$ is given by \eqref{eq:phi}, the function
$\U^\eps=\U^\eps(t,z,y)$ is assumed to be smooth, $\Gamma$-periodic
with 
respect to $y$, and admits an asymptotic expansion
\begin{equation}\label{eq:DA}
  \U^\eps(t,z,y)\Eq \eps 0 \sum_{j\ge 0}\eps^{j/2}U_j(t,z,y). 
\end{equation}
Note that due to the inclusion of the factor $\eps^{-d/4}$ the
$L^2(\R^d)$ norm of the right hand side of \eqref{eq:change} is in
fact uniformly bounded with respect to $\eps$,  
whereas the $L^\infty(\R^d)$ norm in general will grow as $\eps \to
0$. The asymptotic expansion \ref{eq:DA} therefore has to be
understood in the $L^2$ sense. 

Taking into account that in view of \eqref{eq:phi},
$\nabla_x\phi_m(t,x)=p(t)$, we compute:  
\begin{align*}
  \eps^{d/4}e^{-i\phi_m/\eps}i\eps \d_t \psi^\eps &= i\eps \d_t \U^\eps
   -i\sqrt\eps \dot q \cdot \nabla_z \U^\eps -\d_t \phi_m \U^\eps ,\\
\eps^{d/4}e^{-i\phi_m/\eps}\eps^2 \Delta \psi^\eps &= \eps \Delta_z
\U^\eps  +\Delta_y \U^\eps + 2\sqrt\eps \(\nabla_y\cdot
\nabla_z\)\U^\eps-|p|^2 \U^\eps \\
&\quad +2i \sqrt\eps p\cdot \nabla_z
\U^\eps + 2i p\cdot \nabla_y \U^\eps,
\end{align*}
where in all of the above expressions, the various functions have to
be understood to be evaluated as follows: 
\begin{equation*}
  \psi^\eps=\psi^\eps(t,x)\quad ;\quad \U^\eps=\U^\eps
  \(t,\frac{x-q(t)}{\sqrt\eps},\frac{x}{\eps}\). 
\end{equation*}
Thus, ordering equal powers of $\eps$ in equation \eqref{eq:phi} we find that
\begin{equation*}
  \eps^{d/4}e^{-i\phi_m/\eps}\(i\eps \d_t
  \psi^\eps+\frac{\eps^2}{2}\Delta \psi^\eps - V_\Gamma
  \(\frac{x}{\eps}\) \psi^\eps 
  -V(x)\psi^\eps\) = b_0^\eps +\sqrt\eps b_1^\eps +\eps b_2^\eps, 
\end{equation*}
with
\begin{align*}
b_0^\eps&= -\d_t \phi_m \U^\eps  +\frac{1}{2} \Delta_y \U^\eps
-\frac{1}{2}\lvert p\rvert^2\U^\eps 
+ip\cdot \nabla_y \U^\eps
-V_\Gamma(y) \U^\eps
-V(q)\U^\eps
,\\
b_1^\eps&= -i \dot q \cdot \nabla_z \U^\eps
+\(\nabla_y\cdot\nabla_z\)\U^\eps +i p \cdot \nabla_z \U^\eps,\\
b_2^\eps&= i\d_t \U^\eps +\frac{1}{2}\Delta_z \U^\eps.
\end{align*}
So far, we have neither used the fact that $q(t)$, $p(t)$ are given by the
Hamiltonian flow \eqref{eq:traj}, nor the explicit dependence of
$\phi_m$ on time. Using these properties, allows us to rewrite
$b_0^\eps, b_1^\eps, b_2^\eps$ as follows: 
\begin{align*}
  b_0^\eps&=  \(h_m^{\rm sc}(p(t),q(t))+\nabla V\(q(t)\)\cdot \(x-q(t)\)\)\U^\eps
  -  H_\Gamma\(p(t)\) \U^\eps -V (q(t))\U^\eps ,\\
b_1^\eps &= i\(p(t) -\nabla_k E_m\(p(t)\)\)\cdot \nabla_z \U^\eps + 
\(\nabla_y\cdot\nabla_z\)\U^\eps,\\
b_2^\eps &= i\d_t \U^\eps +\frac{1}{2}\Delta_z \U^\eps.
\end{align*}
Now, recall that in  the above lines, $\U^\eps$ is evaluated at the
shifted spatial variable $z=(x-q(t))/\sqrt\eps.$ Taking 
this into account, we notice that the above hierarchy has to be
modified, and we find:
\begin{align*}
  b_0^\eps &=  h_m^{\rm sc}(p(t),q(t))\, \U^\eps
  -  H_\Gamma\(p(t)\) \U^\eps -V\(q(t)+z\sqrt\eps \) \U^\eps ,\\
b_1^\eps&= i\(p(t) -\nabla_k E_m\(p(t)\)\)\cdot \nabla_z \U^\eps + 
\(\nabla_y\cdot\nabla_z\)\U^\eps+(\nabla V\(q(t)\)\cdot z)\, \U^\eps,\\
b_2^\eps &= i\d_t \U^\eps +\frac{1}{2}\Delta_z \U^\eps.
\end{align*}
Next, we perform a Taylor expansion of $V$ around the point $q(t)$:
\begin{equation*}
  V\(q(t)+z\sqrt\eps \)= V\(q(t)\) + \sqrt\eps \nabla V\(q(t)\)\cdot z +
  \frac{\eps}{2} \< z, \nabla^2V\(q(t)\) z \>+ \O\( \eps^{3/2}\<z\>^3\),
\end{equation*}
since $V$ is at most quadratic in view of Assumption
\ref{hyp:V}. Recalling that $h_m^{\rm sc}(p,q)=E_m(p)+V(q)$, the terms 
involving $V(q)$ cancel out in $b_0^\eps$, the terms involving $\nabla
V(q)$ cancel out in $b_1^\eps$, and thus, we finally obtain:
\begin{lemma}\label{lem:form}
  Let the Assumptions \ref{hyp:V}, \ref{hyp:E} hold and $\psi^\eps$ be
  related to $\U^\eps$ through 
  \eqref{eq:change}. Then it holds
  \begin{align*}
    i\eps \d_t
  \psi^\eps& +\frac{\eps^2}{2}\Delta \psi^\eps - V_\Gamma
  \(\frac{x}{\eps}\)\psi^\eps 
  -V(x)\psi^\eps =\\
& \frac{e^{i\phi_m/\eps}}{\eps^{d/4}}\( b_0^\eps +\sqrt\eps
  b_1^\eps  +\eps b_2^\eps +\eps^{3/2}r^\eps\)(t,z,y)\Big|_{(z,y)=
    \(\frac{x-q(t)}{\sqrt\eps},\frac{x}{\eps}\) },
  \end{align*}
with 
\begin{align*}
  b_0^\eps & = \(E_m\(p(t)\) -H_\Gamma \(p(t)\)\)\U^\eps,\\
b_1^\eps  & = i\(p(t) -\nabla_k E_m\(p(t)\)\)\cdot \nabla_z \U^\eps + 
\(\nabla_y\cdot\nabla_z\)\U^\eps,\\
b_2^\eps&= i\d_t \U^\eps +\frac{1}{2}\Delta_z \U^\eps - \frac{1}{2}
 \< z, \nabla^2V\(q(t)\) z \> \, \U^\eps,
 \end{align*}
and a remainder $r^\eps(t,z,y)$ satisfying
\begin{align*}
|r^\eps(t,z,y)|&\le C \<z\>^3 |\U^\eps(t,z,y)|,\quad \forall
(t,z,y)\in \R\times\R^d\times Y, 
\end{align*}
where the constant $C>0$ is independent of $t,z,y$ and $\eps$. 
\end{lemma}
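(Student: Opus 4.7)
The proof is a direct but bookkeeping-heavy computation. My plan is to differentiate the ansatz \eqref{eq:change} via the chain rule (with $z=(x-q(t))/\sqrt\eps$ and $y=x/\eps$ treated as independent slots), organize the result in powers of $\sqrt\eps$, then use the Hamilton system \eqref{eq:traj}, the definitions \eqref{eq:phi}--\eqref{eq:action} of $\phi_m$, the identity $H_\Gamma(k)=\tfrac12(-i\nabla_y+k)^2+V_\Gamma(y)$, and a second-order Taylor expansion of $V$ around $q(t)$ to collapse the hierarchy into the advertised form.

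Concretely, I first record $\nabla_x\phi_m=p(t)$ and $-\d_t\phi_m=-\dot S_m-\dot p\cdot(x-q)+p\cdot\dot q$ \emph{without} yet using \eqref{eq:traj}. Passing $e^{-i\phi_m/\eps}$ through $i\eps\d_t+\tfrac{\eps^2}{2}\Delta$ then produces the first two explicit displays of the excerpt: $b_0^\eps$ contains $-\d_t\phi_m\,\U^\eps$, the combination $\tfrac12\Delta_y\U^\eps+ip\cdot\nabla_y\U^\eps-\tfrac12|p|^2\U^\eps-V_\Gamma(y)\U^\eps$ (recognizable as $-H_\Gamma(p)\U^\eps$), and the external potential $-V(x)\U^\eps$; the $\sqrt\eps$-line gathers $-i\dot q\cdot\nabla_z\U^\eps+ip\cdot\nabla_z\U^\eps+(\nabla_y\cdot\nabla_z)\U^\eps$; and the $\eps$-line is simply $i\d_t\U^\eps+\tfrac12\Delta_z\U^\eps$.

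Next, I would invoke \eqref{eq:traj}, \eqref{eq:action}, and the Legendre relation \eqref{eq:lagrange} to compute
\[
-\d_t\phi_m=h_m^{\rm sc}(p(t),q(t))+\nabla V(q(t))\cdot(x-q(t)),
\]
which turns $b_0^\eps$ into $(h_m^{\rm sc}(p,q)+\nabla V(q)\cdot(x-q))\U^\eps-H_\Gamma(p)\U^\eps-V(x)\U^\eps$; simultaneously, $\dot q=\nabla_k E_m(p)$ collapses $b_1^\eps$ to $i(p-\nabla_k E_m(p))\cdot\nabla_z\U^\eps+(\nabla_y\cdot\nabla_z)\U^\eps$. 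Substituting $x-q=\sqrt\eps\,z$ drops the $\nabla V(q)\cdot(x-q)\U^\eps$ piece by one power of $\sqrt\eps$, so it migrates from $b_0^\eps$ into $b_1^\eps$ as $(\nabla V(q)\cdot z)\U^\eps$, giving the second intermediate form of the excerpt.

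Finally, by Assumption \ref{hyp:V} the integral Taylor formula with $\d^\gamma V\in L^\infty$ for $|\gamma|=2,3$ yields
\[
V(q+\sqrt\eps z)=V(q)+\sqrt\eps\,\nabla V(q)\cdot z+\tfrac{\eps}{2}\langle z,\nabla^2V(q)z\rangle+\eps^{3/2}\rho^\eps(t,z),
\]
with $|\rho^\eps(t,z)|\le C\langle z\rangle^3$ uniformly in $t,z,\eps$. Substituting, the $V(q)\U^\eps$ piece cancels against the $V(q)$ hidden in $h_m^{\rm sc}(p,q)=E_m(p)+V(q)$, leaving $b_0^\eps=(E_m(p)-H_\Gamma(p))\U^\eps$; the $\sqrt\eps$-piece $\nabla V(q)\cdot z\,\U^\eps$ cancels the $(\nabla V(q)\cdot z)\U^\eps$ accumulated in $b_1^\eps$, producing the stated $b_1^\eps$; the quadratic piece contributes $-\tfrac12\langle z,\nabla^2V(q)z\rangle\U^\eps$ to $b_2^\eps$; and the cubic Taylor remainder multiplied by $\U^\eps$ provides $r^\eps$ with the claimed pointwise bound. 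The only real pitfall is tracking the cross-order cancellations between $V(q)$, $\nabla V(q)\cdot z$, and $\d_t\phi_m$; writing $b_0^\eps$ and $b_1^\eps$ side by side before collecting makes these transparent.
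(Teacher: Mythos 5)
Your proposal is correct and follows essentially the same route as the paper: raw substitution of the ansatz and ordering in powers of $\sqrt\eps$, recognition of the combination $\tfrac12\Delta_y+ip\cdot\nabla_y-\tfrac12|p|^2-V_\Gamma$ as $-H_\Gamma(p)$, use of \eqref{eq:traj} and the action/Lagrangian to get $-\d_t\phi_m=h_m^{\rm sc}+\nabla V(q)\cdot(x-q)$, migration of the linear term to the $\sqrt\eps$ level via $x-q=\sqrt\eps z$, and the second-order Taylor expansion of $V$ whose cancellations and cubic remainder give the stated $b_j^\eps$ and $r^\eps$. The only (harmless) difference is that you keep the external potential evaluated at $x=q(t)+\sqrt\eps z$ from the outset, which is slightly tidier bookkeeping than the paper's intermediate displays.
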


\subsection{Introducing the approximate solution}\label{sec:app}

We now expand $\U^\eps$ in powers of $\eps$, according to
\eqref{eq:DA}. To this end, we introduce the following
(time-dependent) linear operators 
\begin{align*}
  L_0& = E_m\(p(t)\) -H_\Gamma \(p(t)\),\\
 L_1 &= i\(p(t) -\nabla_k E_m\(p(t)\)\)\cdot \nabla_z + 
\nabla_y\cdot\nabla_z,\\
L_2&= i\d_t  +\frac{1}{2}\Delta_z  - \frac{1}{2}
 \< z, \nabla^2V\(q(t)\) z \>. 
\end{align*}
In order to solve \eqref{eq:NLS0} up to a sufficiently small error
term (in $L^2$), 
we need to cancel the first three terms in our asymptotic
expansion. This yields, the following system of equations 
\be
\label{eq:systU}
\left \{
\begin{aligned}
  &L_0 U_0=0,\\
&L_0 U_1+L_1U_0=0,\\
&L_0 U_2+L_1 U_1+L_2 U_0=0.
\end{aligned}
\right.
\ee
Assuming for the moment that we can do so, this means that we (formally) solve \eqref{eq:NLS0} up to errors of order $\eps^{3/2}$ (in $L^2$),
which is expected to generate a small perturbation of the exact
solution (in view of the $\eps$ in front of the time derivative of
$\psi^\eps$ in \eqref{eq:NLS0}). 

We consequently define the approximate solution 
\begin{equation}
  \label{eq:app}
  \psi_{\rm app}^\eps(t,x):= \eps^{-d/4}
  \(U_0 + \sqrt{\eps} U_1 + \eps U_2 \)
  \(t,\frac{x-q(t)}{\sqrt\eps},\frac{x}{\eps}\) e^{i \phi_m(t,x) / 
    \eps}.
\end{equation}
In view of Lemma \ref{lem:form}, we thus have the following result (provided we can solve the system \eqref{eq:systU} in a unique way):
\begin{lemma}\label{lem:app}
  Let $\psi_{\rm app}^\eps$ given by \eqref{eq:app}, where $U_0, U_1,
  U_2$ solve \eqref{eq:systU}. Then 
  \begin{align*}
    \(i\eps \d_t
   +\frac{\eps^2}{2}\Delta  -
  V_\Gamma 
  -V \)\psi_{\rm app}^\eps=
 \frac{e^{i\phi_m/\eps}}{\eps^{d/4}}
 \eps^{3/2}\(r^\eps+r_1^\eps+r_2^\eps\) (t,z,y)\Big|_{(z,y)= 
    \(\frac{x-q(t)}{\sqrt\eps},\frac{x}{\eps}\) },
  \end{align*}
where the remainder terms $r^\eps_1, r^\eps_2$ are given by
\begin{align*}
r_1^\eps(t,z,y)=L_2U_1(t,z,y),\quad r_2^\eps(t,z,y)=L_1U_2(t,z,y),
\end{align*}
and $r^\eps$ satisfies 
\begin{align*}
|r^\eps(t,z,y)|\le C \<z\>^3 \left\lvert \(U_0+\sqrt \eps U_1+\eps
  U_2\)(t,z,y)\right\rvert,\quad \forall 
(t,z,y)\in \R\times\R^d\times Y, 
\end{align*}
where the constant $C>0$ is independent of $t,z,y$ and $\eps$. 
\end{lemma}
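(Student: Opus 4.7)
The proof is essentially a bookkeeping computation built directly on top of Lemma~\ref{lem:form}. The plan is to substitute the truncated expansion $\U^\eps = U_0 + \sqrt\eps\,U_1 + \eps\,U_2$ into the formula of Lemma~\ref{lem:form}, use linearity of the operators $L_0,L_1,L_2$, and collect powers of $\sqrt\eps$.

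Concretely, first I would apply Lemma~\ref{lem:form} to $\psi_{\rm app}^\eps$, which is of the form \eqref{eq:change} with $\U^\eps$ equal to the truncated expansion. The lemma then yields
\[
 \(i\eps \d_t +\tfrac{\eps^2}{2}\Delta  -  V_\Gamma  -V \)\psi_{\rm app}^\eps
 = \frac{e^{i\phi_m/\eps}}{\eps^{d/4}}\bigl(b_0^\eps+\sqrt\eps\, b_1^\eps+\eps\, b_2^\eps+\eps^{3/2}r^\eps\bigr),
\]
evaluated at $(z,y)=\bigl((x-q(t))/\sqrt\eps,x/\eps\bigr)$. The remainder $r^\eps$ is the Taylor remainder of $V$ around $q(t)$ already controlled in Lemma~\ref{lem:form}, and since it is linear in $\U^\eps$, it automatically satisfies the stated bound with $U_0+\sqrt\eps U_1+\eps U_2$ on the right.

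Next I would exploit that each $L_j$ acts linearly on $\U^\eps$, so that
\[
 b_0^\eps+\sqrt\eps\, b_1^\eps+\eps\, b_2^\eps
 = \sum_{j=0}^{2}\sum_{\ell=0}^{2}\eps^{(j+\ell)/2}\,L_j U_\ell .
\]
Regrouping by total powers of $\sqrt\eps$ gives the cascade
\[
\eps^0(L_0U_0)\;+\;\eps^{1/2}(L_0U_1+L_1U_0)\;+\;\eps(L_0U_2+L_1U_1+L_2U_0)\;+\;\eps^{3/2}(L_1U_2+L_2U_1)\;+\;\eps^{2}L_2U_2.
\]
The first three groups vanish by the hypothesis that $(U_0,U_1,U_2)$ solves the system \eqref{eq:systU}. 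What remains is the $\eps^{3/2}$ term $L_1U_2+L_2U_1=r_2^\eps+r_1^\eps$ plus an $\eps^{2}L_2U_2$ contribution that can be written as $\eps^{3/2}\cdot\sqrt\eps\,L_2U_2$ and hence folded into the $\eps^{3/2}$ factor in front of the remainder (or harmlessly absorbed into $r_1^\eps$ up to a $\sqrt\eps$ prefactor that does not affect stability estimates). Combined with the Taylor remainder contribution $\eps^{3/2}r^\eps$ already present, this yields exactly the formula stated in the lemma.

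There is no real obstacle: the only things to check are the linearity of the $L_j$'s (immediate from their definitions) and the power counting. The slightly subtle point, if any, is the $\eps^2L_2U_2$ leftover, which is legitimately of order $\eps^{3/2}$ with an extra $\sqrt\eps$ gain; this is consistent with absorbing it into the displayed $\eps^{3/2}$ remainder, and the bound on $r^\eps$ inherited from Lemma~\ref{lem:form} extends verbatim since $r^\eps$ depends linearly on the full truncated amplitude $U_0+\sqrt\eps U_1+\eps U_2$.
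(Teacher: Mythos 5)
Your proof is correct and is essentially the paper's own (implicit) argument: the paper obtains Lemma~\ref{lem:app} directly from Lemma~\ref{lem:form} by inserting the truncated amplitude $U_0+\sqrt\eps\,U_1+\eps U_2$, using linearity of $L_0,L_1,L_2$, and cancelling the orders $\eps^0,\eps^{1/2},\eps$ via the system \eqref{eq:systU}, exactly as you do. Your observation about the leftover $\eps^{2}L_2U_2$ term is well taken --- the paper's statement silently absorbs it --- and writing it as $\eps^{3/2}\sqrt\eps\,L_2U_2$ is indeed harmless for the subsequent stability estimate, since $L_2U_2$ enjoys the same $L^2_zL^\infty_y$ control (given the regularity of $u$ required in Proposition~\ref{prop:stab}) as the displayed remainders.
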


\section{Derivation of the homogenized equation}
\label{sec:alg}

\subsection{Some useful algebraic identities} Given the form of $L_0$, the equation $L_0U_0=0$ implies
\begin{equation}\label{eq:U0}
  U_0(t,z,y)= u(t,z)\chi_m\(y,p(t)\). 
\end{equation}
Before studying the other two equations, we shall recall some algebraic
formulas related to the eigenvalues and eigenvectors of
$H_\Gamma$. First, in view of the identity \eqref{eq:bloch}, we have
\begin{equation}\label{eq:dbloch}
  \nabla_k\(H_\Gamma -E_m\) \chi_m + \(H_\Gamma -E_m\) \nabla_k\chi_m=0.
\end{equation}
Taking the scalar product in $L^2(Y)$ with $\chi_m$, we infer
\begin{align*}
  \nabla_k E_m = \<\chi_m, \nabla_k H_\Gamma \chi_m\>_{L^2(Y)}
  +\<\chi_m,\(H_\Gamma -E_m\) \nabla_k\chi_m\>_{L^2(Y)}.
\end{align*}
Since $H_\Gamma $
is self-adjoint, the last term is zero, thanks to \eqref{eq:bloch}. We
infer
\begin{equation}
  \label{eq:dkE}
  \nabla_k E_m(k) = \<\chi_m,\(-i\nabla_y+k\)\chi_m\>_{L^2(Y)}.
\end{equation}
Differentiating \eqref{eq:dbloch} again, we have, for all $j,\ell \in
\{1,\dots,d\}$: 
\begin{align*}
  \d_{k_j k_\ell}^2 \(H_\Gamma -E_m\) \chi_m &+ \d_{k_j}\(H_\Gamma -E_m\)
  \d_{k_\ell}\chi_m + \d_{k_\ell}\(H_\Gamma -E_m\) \d_{k_j}\chi_m\\
&+\(H_\Gamma -E_m\)
  \d_{k_j k_\ell}^2\chi_m=0. 
\end{align*}
Taking the scalar product with $\chi_m$, we have:
\begin{equation}
  \label{eq:d2kE}
  \begin{aligned}
      \d_{k_j k_\ell}^2 E_m(k) &= \delta_{j\ell} +\< \(-i
  \d_{y_j}+k_j\)\d_{k_\ell}\chi_m +\(-i
  \d_{y_\ell}+k_\ell\)\d_{k_j}\chi_m  ,\chi_m\>_{L^2(Y)}\\
&\quad -\< \d_{k_\ell}E_m\d_{k_j}\chi_m +\d_{k_j}E_m\d_{k_\ell}\chi_m
,\chi_m\>_{L^2(Y)}. 
  \end{aligned}
\end{equation}

\subsection{Higher order solvability conditions} By Fredholm's alternative, a necessary {\it and} sufficient condition to
solve the equation $L_0U_1+L_1U_0=0$, is that $L_1U_0$ is orthogonal to
$\ker L_0$, 
that is:
\begin{equation}\label{eq:orth1}
  \<\chi_m,L_1 U_0\>_{L^2(Y)}=0. 
\end{equation}
Given the expression of $L_1$ and the formula \eqref{eq:U0}, we
compute
\begin{equation*}
  L_1U_0 = i\(p(t)-\nabla_kE_m\(p(t)\)\)\cdot \nabla_z u(t,z)
  \chi_m\(y,p(t)\) + \nabla_y\chi_m\(y,p(t)\)\cdot \nabla_z u(t,z). 
\end{equation*}
In view of \eqref{eq:dkE}, we infer that \eqref{eq:orth1} is
automatically fulfilled. We thus obtain
\begin{equation*}
  U_1(t,z,y) = u_1(t,z)\chi_m\(y,p(t)\) + u_1^\perp(t,z,y),
\end{equation*}
where $u_1^\perp$, the part of $U_1$ which is orthogonal to $\ker
L_0$, is obtained by inverting an elliptic equation: 
\begin{equation*}
  u_1^\perp = -L_0^{-1}L_1 U_0.
\end{equation*}
Note that the formula for $L_1U_0$ can also be written as
\begin{equation*}
  L_1U_0 =-i \nabla_k \(E_m\(p(t)\)-H_\Gamma\(p(t)\)\)
  \chi_m\(y,p(t)\) \cdot \nabla_z u(t,z),
\end{equation*}
thus taking into account \eqref{eq:dbloch}, we simply have:
\begin{equation*}
  u_1^\perp (t,z,y) = -i \nabla_k \chi_m\(y,p(t)\)\cdot \nabla_z
  u(t,z). 
\end{equation*}
At this stage, we shall, for simplicity choose $u_1=0$, in which case
$U_1$ becomes simply a function of $u$:
\begin{equation}\label{eq:U1}
  U_1(t,z,y) = -i \nabla_k \chi_m\(y,p(t)\)\cdot \nabla_z
  u(t,z). 
\end{equation}

As a next step in the formal analysis, we must solve
\begin{equation*}
  L_0 U_2+L_1 U_1+L_2 U_0=0.
\end{equation*}
By the same argument as before, we require
\begin{equation}\label{eq:orth2}
  \<\chi_m,L_1 U_1+L_2 U_0\>_{L^2(Y)}=0. 
\end{equation}
With the expression \eqref{eq:U1}, we compute
\begin{equation*}
  L_1U_1 = \sum_{j,\ell=1}^d \( \(p(t)-\nabla_k E_m\(p(t)\)\)_j
  \d_{k_\ell}\chi_m\(y,p(t)\) -i \d_{k_j
    k_\ell}^2\chi_m\(y,p(t)\)\)\d_{z_j z_\ell}^2 u,
\end{equation*}
and we also have
\begin{align*}
  L_2U_0 &= \(\( i\d_t +\frac{1}{2}\Delta_z
  -\frac{1}{2} \< z, \nabla^2V\(q(t)\) z \> \)u\)\chi_m\(y,p(t)\)\\
&\quad +
  iu\, \dot p(t)\cdot \nabla_k \chi_m\(y,p(t)\). 
\end{align*}
Recalling that $\dot p(t)= - \nabla V(q(t))$, we find:
  \begin{align*}
  & \<\chi_m,L_1 U_1+L_2 U_0\>_{L^2(Y)}= \\
  & = \( i\d_t +\frac{1}{2}\Delta_z
  -\frac{1}{2} \< z, \nabla^2V\(q(t)\) z \>\)u
 -i \nabla V(q(t)) \cdot  \< \chi_m, \nabla_k  \chi_m  \>_{L^2(Y)}  u  \\
 & -\sum_{j,\ell} \< \chi_m,\d_{k_j} E_m\(p(t)\)
  \d_{k_\ell}\chi_m +i \d_{k_j
    k_\ell}^2\chi_m\>_{L^2(Y)}\d_{z_j z_\ell}^2 u
  \end{align*}
By making the last sum symmetric with respect to $j$ and $\ell$, and
using \eqref{eq:d2kE}, we finally obtain the homogenized Schr\"odinger
equation with time-dependent effective mass: 
\begin{equation}
  \label{eq:u}
  i\d_t u + \frac{1}{2}{\rm div}_z\(\nabla^2_k E_m\(p(t)\)\cdot
  \nabla_z\) u = \frac{1}{2} \< z, \nabla^2V\(q(t)\) z \> u + i \beta(t) u, 
\end{equation}
where we recall that 
\[
\beta(t)= \nabla V(q(t))\cdot \< \chi_m( p(t)), \nabla_k \chi_m(p(t))
\>_{L^2(Y)} , 
\]
the so-called Berry phase term. From $\| \chi_m \|_{L^2(Y)} =1$, we infer that 
$$\text{Re}  \< \chi_m, \nabla_k \chi_m \>_{L^2(Y)} = 0.$$ In other words, 
$ \< \chi_m, \nabla_k \chi_m \>_{L^2(Y)} \in i \R$ and thus $i
\beta(t)\in \R$, acts like a purely time-dependent, real-valued,
potential. Thus, invoking the unitary change of variable 
\[
v(t,z) = u(t,z) e^{ - \int_0^t \beta(s) ds}
\]
implies that $v(t,z)$ solves \eqref{eq:v}.  
Equation~\eqref{eq:u} models a quantum mechanical time-dependent
harmonic oscillator, in which the time 
dependence is present both in the differential operator, and in the
potential. 
\begin{remark} \label{rem:effective}
In the case where $V(x)=0$, we have  $\beta(t)=0$ and $p(t)=p_0$
(in view of  Example~\ref{ex:noV}). In this case $v(t,z) = u(t,z)$ and  
Equation~\eqref{eq:u} simplifies to an equation with a 
{\it time-independent} effective mass tensor:
\begin{equation*}
  i\d_t u + \frac{1}{2}{\rm div}_z\(\nabla^2_k E_m\(p_0\)\cdot
  \nabla_z\) u =0.
\end{equation*}
This equation has been derived in \cite{AlPi05} (see also \cite{Sp,
  HoWe10, IlWe10} for similar results).  
Note, however, that in the quoted works the scaling of the original
equation \eqref{eq:NLS0} is different (i.e. not in semiclassical
form). 
\end{remark}
Assuming for the moment that we can solve \eqref{eq:u}, we can write
\begin{equation}\label{eq:U2}
  U_2(t,z,y)= u_2(t,z)\chi_m\(y,p(t)\) + u_2^\perp(t,z,y),
\end{equation}
where
\begin{equation*}
  u_2^\perp = -L_0^{-1}\(L_1 U_1+L_2 U_0\).
\end{equation*}
Like we did for $u_1$, we shall from now on also impose $u_2\equiv 0$ and thus $U_2 = u_2^\perp$. For the upcoming analysis, we define the following class of energy spaces
\begin{equation*}
  \Sigma^k = \left\{ f\in L^2(\R^d)\ ;\ \| f \| _{\Sigma^k}:=
    \sum_{|\alpha|+|\beta|\le  k}\left\lVert x^\alpha \d_x^\beta
      f\right\rVert_{L^2(\R^d)}<\infty; \ k \in \N \right\}. 
\end{equation*}

Having in mind \eqref{eq:U0}, \eqref{eq:U1}, \eqref{eq:U2}, and the fact that $L_0^{-1}: L^2(Y)\to H^2(Y)$, we directly obtain the following result:

\begin{proposition}\label{prop:app} Let Assumption \ref{hyp:E} hold and let $u\in C(\R; \Sigma^k)$ solve \eqref{eq:u}. Set 
\begin{align*}
U_0 (t,z,y) &= u(t,z) \chi_m \(y,p(t)\), \\
U_1 (t,z,y)&= -i \nabla_k \chi_m\(y,p(t)\)\cdot \nabla_z
  u(t,z), \\
U_2(t,z,y)&=  -L_0^{-1}\(L_1 U_1(t,z,y)+L_2 U_0(t,z,y)\).
\end{align*}
Then $U_j \in C(\R; \Sigma^{k-j}_z\times W^{\infty, \infty} (Y))$, for $j=0,1,2$ and $(U_0,U_1,U_2)$ solves \eqref{eq:systU}.
\end{proposition}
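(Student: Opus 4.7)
This proposition is essentially a repackaging of the computations performed in Section~\ref{sec:alg}: most of the algebra has already been done, and the job is to certify that it produces genuine solutions of \eqref{eq:systU} with the claimed regularity. I would organize the argument in three stages.

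First, I would extract the fast-variable regularity that will be used throughout. By Assumption~\ref{hyp:E} the eigenvalue $E_m$ is simple, and of multiplicity one, in a neighborhood of the compact curve $\{p(t):t\in\R\}$; Rellich--Kato analytic perturbation theory then gives $k\mapsto\chi_m(\cdot,k)$ of class $C^\infty$ on that neighborhood, and elliptic bootstrap on $Y$ (the coefficients of $H_\Gamma(k)$ are smooth) upgrades each derivative $\partial_k^\alpha \chi_m(y,p(t))$ to $C(\R;W^{\infty,\infty}(Y))$. The uniform spectral gap gives $\|L_0^{-1}\|_{L^2(Y)\to L^2(Y)}\le 1/\delta$ uniformly in $t$ on $(\ker L_0)^\perp$, which again by elliptic bootstrap promotes to $L_0^{-1}:H^s\cap(\ker L_0)^\perp\to H^{s+2}$ with a $t$-uniform bound, for every $s\ge 0$.

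Second, I would verify the three equations of \eqref{eq:systU} in turn. The identity $L_0 U_0=0$ is simply the Bloch relation \eqref{eq:bloch} evaluated at $k=p(t)$. For the second equation, direct computation gives
\[
L_1 U_0 = -i\nabla_k\bigl(E_m(p(t))-H_\Gamma(p(t))\bigr)\chi_m(y,p(t))\cdot\nabla_z u,
\]
so by the differentiated Bloch relation \eqref{eq:dbloch} one can check that $U_1$ as defined is an exact antecedent of $-L_1 U_0$ under $L_0$, and \eqref{eq:dkE} confirms the Fredholm orthogonality $\langle\chi_m,L_1 U_0\rangle_{L^2(Y)}=0$. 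For the third equation, inserting the formulas for $U_0,U_1$ into $\langle\chi_m,L_1 U_1+L_2 U_0\rangle_{L^2(Y)}$ and symmetrizing in the indices $(j,\ell)$ with the help of \eqref{eq:d2kE} reproduces exactly the left-hand side of the envelope equation \eqref{eq:u}. Because $u$ is assumed to solve \eqref{eq:u}, this scalar product vanishes, and Fredholm's alternative, together with the choice $u_2\equiv 0$, yields the unique solution $U_2 = -L_0^{-1}(L_1 U_1 + L_2 U_0)\in(\ker L_0)^\perp$.

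Third, I would read off the regularity term by term. $U_0$ is the product of $u\in C(\R;\Sigma^k_z)$ and a factor in $C(\R;W^{\infty,\infty}(Y))$, hence lies in the stated space. $U_1$ contains one $z$-derivative of $u$, losing one index in $\Sigma$. For $U_2$ the source $L_1 U_1 + L_2 U_0$ carries (i) a quadratic weight $\langle z,\nabla^2 V(q(t))z\rangle u$, controlled by two polynomial weights of $u\in\Sigma^k_z$, (ii) a term $\tfrac12\Delta_z u$, costing two derivatives, and (iii) the time derivative $i\partial_t u$, which the equation \eqref{eq:u} itself converts back into the same two types of terms. All contributions therefore sit in $C(\R;\Sigma^{k-2}_z\otimes L^2(Y))$, and applying $L_0^{-1}$ picks up the required $W^{\infty,\infty}(Y)$ smoothness in $y$ by the uniform elliptic regularity established in the first step. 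Continuity in $t$ follows from the continuity of $u$, of $p(t)$, and of $k\mapsto\chi_m(\cdot,k)$ near the trajectory.

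The one delicate point — and the only genuine obstacle — is the bookkeeping for $U_2$: one must avoid counting the $\partial_t$ in $L_2 U_0$ as an additional cost in $z$-derivatives. This is precisely the place where the hypothesis that $u$ already solves the homogenized equation \eqref{eq:u} is used inside the regularity estimate, trading a time derivative for two spatial derivatives plus a quadratic weight, and making the loss $\Sigma^k\rightsquigarrow\Sigma^{k-2}$ sharp.
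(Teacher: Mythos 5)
Your proposal is correct and follows essentially the same route as the paper: the three equations of \eqref{eq:systU} are verified exactly as in \S\ref{sec:alg} (Bloch relation for $L_0U_0=0$, the identities \eqref{eq:dbloch}--\eqref{eq:d2kE} plus Fredholm's alternative for the next two, with $u_1=u_2\equiv 0$), and the regularity claim rests on the smoothness of $k\mapsto\chi_m(\cdot,k)$ near the trajectory together with the uniform elliptic inversion $L_0^{-1}:L^2(Y)\to H^2(Y)$. Your explicit bookkeeping for $U_2$ — using equation \eqref{eq:u} to trade $\partial_t u$ for two $z$-derivatives plus a quadratic weight, giving the loss $\Sigma^k\to\Sigma^{k-2}$ — is exactly the point the paper leaves implicit, and it is handled correctly.
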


\section{The envelope equation}
\label{sec:profile}

We examine the Cauchy problem for \eqref{eq:u}, with special emphasis
on the large time control of $u$. 

\subsection{The general Cauchy problem}
\label{sec:cauchygen}
Equation~\ref{eq:u} can be seen as the quantum mechanical evolutionary problem corresponding to the following time-dependent
Hamiltonian,
\begin{equation}\label{eq:hamil}
 H(t,z,\zeta)= \frac{1}{2}\<\zeta,\nabla^2_kE_m\(p(t)\)\zeta\> +
  \frac{1}{2}\<z\nabla^2 V\(q(t)\)z\> + i \beta(t).
\end{equation}
Under Assumptions~\ref{hyp:V} and \ref{hyp:E},
this Hamiltonian is self-adjoint, smooth in time, and quadratic in
$(z,\zeta)$  (in fact, at most quadratic would be sufficient). Using the
result given in \cite[p.197]{Kit80} (see also 
\cite{KiKu81}), we directly infer the following existence result:
\begin{lemma}[From \cite{Kit80}]\label{lem:cauchy}
  For  $d\ge 1$ and ${\bf v}_0\in L^2(\R^d)$, consider the equation
  \begin{equation}\label{eq:ugen}
    i\d_t {\bf v} +\frac{1}{2}\sum_{1\le j,k\le d}a_{jk}(t)\d_{jk}^2 {\bf v} =
    \frac{1}{2}\sum_{1\le j,k\le d} b_{jk}(t)x_jx_k {\bf v} \quad ;\quad
    {\bf v} _{\mid t=0} = {\bf v}_0.
  \end{equation}
If the coefficients $a_{jk}$ and $b_{jk}$ are continuous and
real-valued, such 
that the matrices $(a_{jk})_{j,k}$ and $(b_{jk})_{j,k}$ are symmetric
for all time, then \eqref{eq:ugen} has a unique solution ${\bf v} \in
C(\R;L^2(\R^d))$. It satisfies
\begin{equation}\label{eq:est}
  \|{\bf v} (t)\|_{L^2(\R^d)} = \|{\bf v}_0\|_{L^2(\R^d)},\quad \forall t\in \R.
\end{equation}
Moreover, if ${\bf v}_0\in \Sigma^k$ for some $k\in \N$, then ${\bf v} \in
C(\R;\Sigma^k)$. 
\end{lemma}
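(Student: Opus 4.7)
The plan is to construct the propagator $U(t,s)$ associated with the time-dependent quadratic Hamiltonian via an abstract theorem on evolution equations with time-dependent self-adjoint generators (in the spirit of Kato), and then to upgrade from $L^2$ to $\Sigma^k$ regularity by commutator estimates that exploit the at-most-quadratic structure.

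First, for each fixed $t$ the operator
\[
H(t) := -\frac{1}{2}\sum_{j,k} a_{jk}(t)\, \d_{jk}^2 + \frac{1}{2}\sum_{j,k} b_{jk}(t)\, x_j x_k
\]
is formally symmetric on $\Sch(\R^d)$ since the matrices $(a_{jk}(t))$ and $(b_{jk}(t))$ are real and symmetric. I would first check that $H(t)$ is essentially self-adjoint on $\Sch(\R^d)$ by a standard perturbation argument against the isotropic harmonic oscillator (valid even when $(a_{jk}(t))$ is only symmetric, not positive), with $\Sigma^2$ sitting inside the domain of the closure for each $t$. Continuity of the coefficients then ensures the stability and continuity hypotheses required by Kato's theorem for time-dependent generators: $\Sigma^2$ is a common core, and $t\mapsto H(t){\bf v}$ is continuous in $L^2$ for every ${\bf v}\in\Sigma^2$. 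This yields a strongly continuous unitary two-parameter propagator $U(t,s)$ on $L^2(\R^d)$; setting ${\bf v}(t) := U(t,0){\bf v}_0$ gives the unique solution in $C(\R;L^2(\R^d))$, and unitarity immediately yields \eqref{eq:est}.

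For the persistence of $\Sigma^k$, the crucial algebraic observation is the commutator structure
\[
[H(t), x_j] = -\sum_\ell a_{j\ell}(t)\, \d_\ell,\qquad [H(t), \d_j] = -\sum_\ell b_{j\ell}(t)\, x_\ell,
\]
which shows that commuting $H(t)$ past any monomial operator $x^\alpha \d^\beta$ produces linear combinations of monomials of the same total order $|\alpha|+|\beta|$, with coefficients continuous in $t$ and hence bounded on compact intervals. Setting
\[
\Lambda(t) := \sum_{|\alpha|+|\beta|\le k}\| x^\alpha \d^\beta {\bf v}(t)\|_{L^2(\R^d)}^2,
\]
this structure would produce, on any compact $[-T,T]$, a differential inequality of the form $\Lambda'(t) \le C(T)\Lambda(t)$, and Gr\"onwall would give the desired $\Sigma^k$ bound. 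Continuity of ${\bf v}$ as a $\Sigma^k$-valued function would then follow from the $L^2$ continuity, the uniform bound, and a standard density argument.

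The main obstacle I anticipate is the rigorous justification of this energy estimate, since a priori ${\bf v}$ is only known to lie in $C(\R;L^2)$ and the monomials $x^\alpha\d^\beta$ cannot freely be applied to the equation. The standard remedy is to approximate ${\bf v}_0$ in $\Sigma^k$ by a sequence $({\bf v}_0^n)\subset \Sch(\R^d)$, solve the corresponding Cauchy problems --- whose solutions remain in $\Sch(\R^d)$ for all time (as can be checked by iterating the commutator identities above), so that all manipulations are classical --- derive the Gr\"onwall bound uniformly in $n$, and pass to the limit using the strong continuity of $U(t,s)$ on $L^2$.
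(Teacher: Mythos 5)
The paper itself does not prove this lemma: it is imported wholesale from Kitada \cite{Kit80} (see also \cite{KiKu81}), where the propagator of a time-dependent quadratic Hamiltonian is constructed explicitly, essentially as an oscillatory-integral kernel (a generalized Mehler formula, cf.\ \cite{Hor95}), and the $\Sigma^k$ statement comes along with that explicit representation. Your argument is therefore a genuinely different, self-contained route: abstract evolution-equation theory for the $L^2$ part, plus a commutator/Gr\"onwall energy estimate for the $\Sigma^k$ part. The $\Sigma^k$ part is sound and in fact mirrors the argument the paper itself uses in Proposition~\ref{prop:growth}: your identities $[H(t),x_j]=-\sum_\ell a_{j\ell}(t)\partial_\ell$ and $[H(t),\partial_j]=-\sum_\ell b_{j\ell}(t)x_\ell$ are correct and preserve the total order $|\alpha|+|\beta|$, and regularizing the data is the standard way to make the estimate rigorous. (Watch one mild circularity, though: ``solutions with Schwartz data remain Schwartz, by iterating the commutator identities'' is essentially the $\Sigma^k$-persistence you are in the middle of proving; it is cleaner to obtain invariance of each $\Sigma^{2m}$ directly from the abstract propagator theorem applied with $Y=\Sigma^{2m}$, or to truncate in the Hermite basis.)

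The one genuine thin spot is the invocation of ``Kato's theorem''. The hypotheses you actually verify --- $\Sigma^2$ is a common core and $t\mapsto H(t)\mathbf v$ is $L^2$-continuous for $\mathbf v\in\Sigma^2$ --- are not sufficient for the existence of a unitary propagator; no version of the theorem runs on a common core plus strong continuity alone. Nor is the common-domain version available here, precisely because the matrix $(a_{jk}(t))$ is allowed to degenerate (see the remark following the lemma in the paper): at a time where the kinetic matrix vanishes, $D(H(t))$ is the domain of a multiplication operator and differs from the oscillator-type domain at other times. What rescues your strategy is Kato's ``hyperbolic'' theorem with admissible space $Y=\Sigma^2$ and the isomorphism $S=1-\Delta+|x|^2:\Sigma^2\to L^2$: the self-adjoint generators are trivially stable (unitary groups), $H(t)\in\mathcal B(\Sigma^2,L^2)$ depends norm-continuously on $t$, and --- the key point, where the at-most-quadratic structure enters --- $[S,H(t)]S^{-1}$ is bounded on $L^2$, locally uniformly in $t$, because the commutator of the oscillator with a quadratic Hamiltonian is again of combined degree at most two. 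With that verification supplied, the construction of $U(t,s)$, unitarity giving \eqref{eq:est}, and uniqueness in $C(\R;L^2)$ by duality all go through; also, essential self-adjointness of each $H(t)$ on $\Sch(\R^d)$ is better justified by Nelson's commutator theorem against the harmonic oscillator than by a ``perturbation argument'', since $H(t)$ is not a relatively small perturbation of it.
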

In particular, this implies that if $u_0\in \Sigma^k$, then
\eqref{eq:v} has a unique 
solution $v\in C(\R;\Sigma^k)$. As a consequence, \eqref{eq:u} has a
unique solution $u\in C(\R;\Sigma^k)$ such that $u_{\mid t=0}=u_0$. 
\begin{remark}
  It may happen that the functions $a_{jk}$ are zero on some
  non-negligible set. In this case, \eqref{eq:ugen} ceases to be
  dispersive. Note that the standard harmonic oscillator is dispersive,
  \emph{locally in time} only, since it has eigenvalues. We shall see
  that this is not a problem in our analysis 
  though. 
\end{remark}
\subsection{Exponential control of the envelope equation}\label{sec:exp}
To prove Theorem~\ref{th:main}, we need to control the error present
in Lemma~\ref{lem:app} for large time. In general, i.e. without
extra geometric assumptions on the wave packet, exponential growth in
time must be expected:
\begin{proposition}\label{prop:growth}
  Let $u_0\in \Sigma^k$, $k\in \N$. Then the solution $u$ to
  \eqref{eq:u} satisfies $u\in
C(\R;\Sigma^k)$, and there exists $C>0$ such that  
  \begin{equation*}
    \left\lVert  u(t,\cdot)\right\rVert_{\Sigma^k}\le C
    e^{Ct}, \quad t\ge0. 
  \end{equation*}
\end{proposition}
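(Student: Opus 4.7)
The plan is to reduce the statement to the homogenized equation \eqref{eq:v} for $v$, and then run a standard energy--commutator argument. Since $\beta(t)\in i\R$, the scalar $e^{\int_0^t \beta(s)\, ds}$ linking $u$ and $v$ depends only on $t$ and has modulus one, so it commutes with every $z^\alpha \d_z^\beta$ and $\|u(t,\cdot)\|_{\Sigma^k}=\|v(t,\cdot)\|_{\Sigma^k}$; it thus suffices to prove the bound for $v$. Global existence $v\in C(\R;\Sigma^k)$ is already given by Lemma~\ref{lem:cauchy} applied with $a_{j\ell}(t)=(\nabla^2_k E_m(p(t)))_{j\ell}$ and $b_{j\ell}(t)=(\nabla^2 V(q(t)))_{j\ell}$: by Assumption~\ref{hyp:V}, $b_{j\ell}$ is uniformly bounded in $t$; by Assumption~\ref{hyp:E}, $p(t)$ stays at uniform distance from the band crossing set inside the compact Brillouin zone, so that $\nabla^2_k E_m$ is smooth there and $a_{j\ell}$ is uniformly bounded as well. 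Both matrices are moreover continuous, symmetric, and real.

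Rewrite \eqref{eq:v} as $i\d_t v=\mathcal H(t)v$ with the self-adjoint quadratic Hamiltonian
\[
\mathcal H(t)=-\tfrac12 {\rm div}_z\(M(t)\nabla_z\)+\tfrac12 \<z,N(t)z\>,\quad M(t):=\nabla^2_k E_m\(p(t)\),\ N(t):=\nabla^2V\(q(t)\).
\]
A direct computation yields
\[
[\mathcal H(t),z_j] = -\sum_\ell M_{j\ell}(t)\,\d_{z_\ell},\qquad [\mathcal H(t),\d_{z_j}] = -\sum_\ell N_{j\ell}(t)\,z_\ell,
\]
both of order one in $(z,\d_z)$ with uniformly bounded coefficients. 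Iterating the Leibniz rule $[AB,C]=A[B,C]+[A,C]B$ and inducting on $|\alpha|+|\beta|$, one concludes that $[\mathcal H(t),z^\alpha \d_z^\beta]$ is a linear combination of monomials $z^{\alpha'}\d_z^{\beta'}$ with $|\alpha'|+|\beta'|\le |\alpha|+|\beta|$ and coefficients bounded uniformly in $t$. In other words, the filtration $(\Sigma^k)_{k\ge 0}$ is preserved at the commutator level, with
\[
\|[z^\alpha \d_z^\beta,\mathcal H(t)]\,w\|_{L^2}\le C_k\|w\|_{\Sigma^k},\qquad \forall\,w\in\Sigma^k,\ |\alpha|+|\beta|\le k,\ t\in\R.
\]

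For $A=z^\alpha \d_z^\beta$ with $|\alpha|+|\beta|\le k$, the self-adjointness of $\mathcal H(t)$ gives
\[
\frac{d}{dt}\|Av(t)\|_{L^2}^2 = -2\IM\<Av,[A,\mathcal H(t)]v\>_{L^2},
\]
so Cauchy--Schwarz and the commutator bound yield $\big|\tfrac{d}{dt}\|Av(t)\|_{L^2}^2\big|\le C_k'\|v(t)\|_{\Sigma^k}^2$. Summing over $|\alpha|+|\beta|\le k$ and invoking Grönwall produces $\|v(t)\|_{\Sigma^k}\le \|u_0\|_{\Sigma^k}\,e^{C't}$, which transfers to $u$ and gives the claim. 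The main technical point is to justify the differentiation of $\|Av(t)\|_{L^2}^2$ when $u_0$ has only finite regularity, since the computation above requires $\mathcal H(t)v$ and $[A,\mathcal H(t)]v$ to lie in $L^2$; this is handled by the usual density argument, approximating $u_0$ in $\Sigma^k$ by Schwartz data, whose associated solutions remain in $\Sch(\R^d)$ for all $t\in\R$ by Lemma~\ref{lem:cauchy} applied in every $\Sigma^\ell$, and then passing to the limit.
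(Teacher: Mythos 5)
Your proof is correct and rests on the same mechanism as the paper's: the commutators of $z$ and $\nabla_z$ (more generally $z^\alpha\d_z^\beta$) with the quadratic Hamiltonian have uniformly bounded coefficients and preserve the order, so the $\Sigma^k$ norms satisfy a closed Gronwall inequality. The only differences are organizational — you strip off the Berry phase by passing to $v$, treat all multi-indices at once via the differential energy identity, and spell out the density justification, whereas the paper keeps the $i\beta(t)$ term, argues by induction on $k$ using the Duhamel/$L^2$ estimate — and these do not change the substance of the argument.
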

\begin{proof}
  The result can be established by induction on $k$. The constant $C$
  must actually be expected to depend on $k$, as shown by the case of
  \begin{equation*}
    i\d_t u + \frac{1}{2}\Delta_z u = -\frac{|z|^2}{2} u\quad ;\quad
    u_{\mid t=0} = u_0.
  \end{equation*}
There, the fundamental solution is explicit (generalized Mehler
formula, see e.g. \cite{Hor95}), and we check that 
$\|u(t)\|_{\Sigma^k}$ behaves like $e^{k t}$.  

For $k=0$, the result is obvious, since in view of
Lemma~\ref{lem:cauchy}, the $L^2$-norm is conserved. The case $k=1$
illustrates the general mechanism of the proof, and we shall stick to
this case for simplicity. The key remark is that even though the
operators $z$ and $\nabla_z$ (involved in the definition of
$\Sigma^1$) do not commute with the Hamiltonian \eqref{eq:hamil}, the
commutators yield a closed system of estimates. First, multiplying
\eqref{eq:u} by $z$, we find
\begin{equation*}
  \(i\d_t-H\)zu = -\left[ H,z\right] u = \nabla^2_k
  E_m\(p(t)\)\nabla_z u.
\end{equation*}
The $L^2$ estimate \eqref{eq:est} then yields
\begin{align*}
  \left\lVert z u(t)\right\rVert_{L^2}&\le \left\lVert z
    u_0\right\rVert_{L^2}+ \int_0^t \left\lVert\nabla^2_k
  E_m\(p(s)\)\nabla_z u (s)\right\rVert_{L^2}ds \\
&\le \left\lVert z
    u_0\right\rVert_{L^2}+ C\int_0^t \left\lVert\nabla_z u
    (s)\right\rVert_{L^2}ds, 
\end{align*}
for some $C$ independent of $t$, since $\nabla^2_k
  E_m$ is bounded on $Y^*$ which is compact. Similarly,
  \begin{equation*}
    \(i\d_t-H\)\nabla_zu = -\left[ H,\nabla_z\right] u = \nabla^2
  V\(q(t)\)z u,
  \end{equation*}
and, in view of Assumption~\ref{hyp:V},
\begin{equation*}
  \left\lVert \nabla_z u(t)\right\rVert_{L^2}\le \left\lVert \nabla_z
    u_0\right\rVert_{L^2}+ C\int_0^t \left\lVert z u
    (s)\right\rVert_{L^2}ds. 
\end{equation*}
Summing over the two inequalities and using the conservation of mass,
we infer 
\begin{equation*}
   \left\lVert u(t)\right\rVert_{\Sigma^1}\le \left\lVert 
    u_0\right\rVert_{\Sigma^1}+ C\int_0^t \left\lVert  u
    (s)\right\rVert_{\Sigma^1}ds, 
\end{equation*}
and Gronwall's lemma yields the proposition in the case $k=1$. By
induction, applying $(z,\nabla_z)$ to \eqref{eq:u} $k$ times, the defects of
commutation always yield the same sort of estimate, and the
proposition follows easily.
\end{proof}

\subsection{Gaussian wave packets}
\label{sec:gaussian}

In the case where the initial datum in \eqref{eq:u} is a Gaussian, we
can compute its evolution and show that it remains Gaussian, by
following the same strategy as in
\cite{Hag80} (see also \cite{HaJo00,HaJo01}). As a matter of fact, the
order in which we have proceeded is different from the one in
\cite{Hag80}, since we have isolated the envelope equation
\eqref{eq:u} before considering special initial data. As a
consequence, we have fewer unknowns. Consider \eqref{eq:u} with
initial datum
\begin{equation}\label{eq:CIG}
  u(0,z) = \frac{1}{({\rm det} A)^{1/2}}\exp \(-\frac{1}{2} \<z,BA^{-1}z\>\),
\end{equation}
where the matrices $A$ and $B$ satisfy the following properties:
\begin{align}
&  A\text{ and } B \text{ are invertible};\label{eq:ABinv}\\
&BA^{-1} \text{ is symmetric}: BA^{-1} = M_1+iM_2,\text{ with }M_j
\text{ real symmetric};\label{eq:symm}\\
&\RE BA^{-1} \text{ is strictly positive definite};\label{eq:pos}\\
&\(\RE BA^{-1}\)^{-1}=AA^*.\label{eq:invBA}
\end{align}
\begin{proposition}
  Let $u$ solve \eqref{eq:u}, with initial datum \eqref{eq:CIG}, where
  the matrices $A$ and $B$ satisfy
  \eqref{eq:ABinv}--\eqref{eq:invBA}. Then for all time, $u(t,z)$ is
  given by
  \begin{equation}\label{eq:uG}
    u(t,z)= \frac{1}{({\rm det} A(t))^{1/2}}\exp \(-\frac{1}{2}\<z,B(t)A(t)^{-1}z\>\),
  \end{equation}
where the matrices $A(t)$ and $B(t)$ evolve according to  the
differential equations 
\begin{equation}
  \label{eq:AB}
\left\{
\begin{aligned}
 \dot A(t)& = i\nabla^2_k E_m\(p(t)\)B(t)\quad ;\quad A(0)=A,\\
\dot B(t)& = i\nabla^2_x V\(q(t)\)A(t)\quad ;\quad B(0)=B.
\end{aligned}
\right.  
\end{equation}
In addition, for all time $t\in \R$, $A(t)$ and $B(t)$ satisfy
\eqref{eq:ABinv}--\eqref{eq:invBA}. 
\end{proposition}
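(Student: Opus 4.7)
The approach is the standard Gaussian substitution: plug the ansatz \eqref{eq:uG} into \eqref{eq:u}, reduce the PDE to a system of matrix ODEs for $(A(t),B(t))$, identify this system as \eqref{eq:AB}, and then apply the uniqueness statement in Lemma~\ref{lem:cauchy} to conclude. The heuristic reason this works is that the Hamiltonian \eqref{eq:hamil} is at most quadratic in $(z,\zeta)$, so Gaussian states are preserved under the evolution. Before substituting, I would absorb the Berry phase by writing $u=e^{\int_0^t\beta(s)\,ds}\,v$; since $\beta\in i\R$, this factor is unimodular and does not alter the Gaussian shape, and $v$ now satisfies \eqref{eq:v}, which is exactly of the form covered by Lemma~\ref{lem:cauchy}.

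Setting $M(t):=\nabla^2_k E_m(p(t))$, $N(t):=\nabla^2_x V(q(t))$ and $C(t):=B(t)A(t)^{-1}$ (symmetric thanks to \eqref{eq:symm}), a direct computation using $\partial_t\det A(t)=\det A(t)\,\operatorname{tr}(A(t)^{-1}\dot A(t))$ and $\nabla_z v=-v\,Cz$ gives
\[
i\partial_t v+\tfrac12\operatorname{div}_z\bigl(M\nabla_z v\bigr)=v\Bigl(-\tfrac{i}{2}\langle z,\dot C\,z\rangle+\tfrac12\langle z,CMC\,z\rangle\Bigr),
\]
the two $\operatorname{tr}(MC)$ contributions from the prefactor and from the kinetic term cancelling. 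Matching this against $\tfrac12\langle z,Nz\rangle v$ forces the matrix Riccati equation $\dot C=i(N-CMC)$. Linearizing via $C=BA^{-1}$ with $\dot A=iMB$ and $\dot B=iNA$ gives exactly \eqref{eq:AB}, and indeed $\dot C=\dot BA^{-1}-BA^{-1}\dot AA^{-1}=iN-iCMC$ is consistent. Both $v$ and the Gaussian built from \eqref{eq:AB} lie in $C(\R;\Sigma^k)$ with the same initial datum \eqref{eq:CIG}, so Lemma~\ref{lem:cauchy} identifies them; multiplying back by $e^{\int_0^t\beta\,ds}$ recovers the claimed form \eqref{eq:uG} for $u$.

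For the preservation of \eqref{eq:ABinv}--\eqref{eq:invBA}, the key point is that the linear system \eqref{eq:AB} admits two conserved, symplectic-type bilinear invariants in $(A,B)$, namely $A^T B-B^T A$ and $A^*B+B^*A$; both have zero time derivative by one-line computations using $\dot A=iMB$, $\dot B=iNA$ and the real symmetry of $M(t)$ and $N(t)$. At $t=0$, the vanishing of the first invariant is precisely \eqref{eq:symm}, while \eqref{eq:invBA} is equivalent to $A^*B+B^*A=2I$ (multiply $\operatorname{Re}(BA^{-1})\cdot AA^*=I$ on the right by $A^{-1}$ and on the left by $A^*$). Hence both relations persist for all $t$, yielding \eqref{eq:symm} and \eqref{eq:invBA}. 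Invertibility of $A(t)$ follows: if $A(t)w=0$ with $w\neq0$, then $\langle w,(A^*B+B^*A)w\rangle=0$, contradicting $2\|w\|^2>0$; invertibility of $B(t)$ is symmetric. Finally, \eqref{eq:pos} is automatic from \eqref{eq:invBA}, since $(AA^*)^{-1}$ is positive definite whenever $A$ is invertible.

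The only delicate step is the algebraic bookkeeping in the substitution, namely verifying that the time derivative of the scalar prefactor $(\det A(t))^{-1/2}$ produces exactly the counter-term needed to cancel the $\operatorname{tr}(MC)$ contribution from $\tfrac12\operatorname{div}_z(M\nabla_z\cdot)$ acting on the Gaussian, so that the matching of the PDE reduces cleanly to the quadratic-in-$z$ identity $\dot C=i(N-CMC)$. Beyond this, the argument is purely algebraic: no Gronwall estimate is needed, and the conservation of $A^*B+B^*A=2I$ precludes any loss of invertibility or blow-up along the flow.
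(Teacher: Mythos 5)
Your proposal is correct, and its core is the same as the paper's argument: check that the Gaussian ansatz with $A(t),B(t)$ evolving by \eqref{eq:AB} solves the envelope equation, and identify it with $u$ by the uniqueness part of Lemma~\ref{lem:cauchy}; your reduction to the Riccati equation $\dot C=i(N-CMC)$ for $C=BA^{-1}$, with the two $\operatorname{tr}(MC)$ contributions cancelling, is precisely the computation the paper compresses into ``one easily checks''. The one place where you genuinely deviate is the propagation of \eqref{eq:ABinv}--\eqref{eq:invBA}: the paper simply cites \cite[Lemma~2.1]{Hag80}, whereas you reprove it through the two conserved forms $A^TB-B^TA$ and $A^*B+B^*A$, obtaining invertibility, then \eqref{eq:symm}, \eqref{eq:invBA} and \eqref{eq:pos}. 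This is essentially the content of Hagedorn's lemma, but your version is self-contained, shows that only the real symmetry of $\nabla^2_kE_m(p(t))$ and $\nabla^2_xV(q(t))$ and global solvability of the linear system \eqref{eq:AB} are used, and avoids any external input; note only that the equivalence of \eqref{eq:invBA} with $A^*B+B^*A=2I$ itself uses the symmetry of $BA^{-1}$, which you do have from the first invariant, so the order of deductions should be: invariants, then invertibility, then symmetry, then \eqref{eq:invBA} and \eqref{eq:pos}. One bookkeeping remark: after undoing your gauge change, what you actually obtain is $u(t,z)=e^{\int_0^t\beta(s)\,ds}$ times the right-hand side of \eqref{eq:uG}; since $\beta\in i\R$ this is a harmless scalar phase, and the paper's statement and proof discard it in exactly the same way (strictly speaking it is $v$, solving \eqref{eq:v}, that has precisely the form \eqref{eq:uG}), so this is a shared convention rather than a gap in your argument.
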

\begin{proof}
  The argument is the same as in \cite{Hag80} (see also
  \cite{HaJo00,HaJo01}): One easily checks that if $A(t)$ and $B(t)$ evolve
  according to \eqref{eq:AB}, then $u$
  given by \eqref{eq:uG} solves \eqref{eq:u}. On the other hand, it is
  clear that \eqref{eq:AB} has a global solution. Finally, since $
  \nabla^2_k E_m\(p(t)\)$ and $\nabla^2_x V\(q(t)\)$ are symmetric
  matrices, it follows from \cite[Lemma~2.1]{Hag80} that for all time,
  $A(t)$ and $B(t)$ satisfy \eqref{eq:ABinv}--\eqref{eq:invBA}.  
\end{proof}
\section{Stability of the approximation up to Ehrenfest
  time} \label{sec:stab} 
  
As a final step we need to show that the derived approximation $\psi_{\rm app}^\eps(t)$ indeed approximates the exact solution $\psi^\eps(t)$ up to 
Ehrenfest time.

\begin{proposition}\label{prop:stab}
 Let Assumptions \ref{hyp:V} and \ref{hyp:E} hold and 
$u_0\in \Sigma^5$.  
Then there exists $C>0$ such that 
\begin{equation*}
\| \psi^\eps (t) - \psi_{\rm app}^\eps(t) \|_{L^2(\R^d)} \le C
\sqrt{\eps} e^{C t},   
\end{equation*}
where $\psi_{\rm app}^\eps(t,x)$ is given by \eqref{eq:app} with
$u(t,z)$ solving 
\eqref{eq:u} subject to $u_{\mid t=0}=u_0$, and $U_0$, $U_1$, $U_2$ are given by
Proposition~\ref{prop:app}. 
\end{proposition}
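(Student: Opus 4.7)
The plan is a standard stability argument for a Schr\"odinger equation driven by a small residual. Set $w^\eps := \psi^\eps - \psi^\eps_{\rm app}$. By Lemma~\ref{lem:app}, $\psi^\eps_{\rm app}$ solves \eqref{eq:NLS0} up to a source
\[
R^\eps(t,x) = \frac{e^{i\phi_m/\eps}}{\eps^{d/4}}\,\eps^{3/2}\bigl(r^\eps + r_1^\eps + r_2^\eps\bigr)\Big|_{(z,y)=((x-q(t))/\sqrt\eps,\,x/\eps)},
\]
so that $i\eps\,\d_t w^\eps = H^\eps w^\eps - R^\eps$, where $H^\eps := -\tfrac{\eps^2}{2}\Delta + V_\Gamma(\cdot/\eps) + V$ is self-adjoint on $L^2(\R^d)$. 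Pairing with $w^\eps$ and taking imaginary parts then yields the energy inequality
\[
\frac{d}{dt}\|w^\eps(t)\|_{L^2(\R^d)} \le \frac{1}{\eps}\|R^\eps(t)\|_{L^2(\R^d)}.
\]

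To bound $\|R^\eps\|_{L^2_x}$, I perform the two-scale change of variable $x\mapsto z=(x-q(t))/\sqrt\eps$, under which the $\eps^{-d/4}$ prefactor is absorbed. Using the $\Gamma$-periodicity and smoothness in $y$ of each $U_j$ (Proposition~\ref{prop:app}), this reduces to
\[
\|R^\eps(t)\|_{L^2_x} \le C\eps^{3/2}\Bigl(\sum_{j=0}^{2}\eps^{j/2}\|\langle z\rangle^3 U_j(t)\|_{L^2_z L^\infty_y} + \|L_2 U_1(t)\|_{L^2_z L^\infty_y} + \|L_1 U_2(t)\|_{L^2_z L^\infty_y}\Bigr).
\]
Counting the $z$-derivatives and polynomial weights on $u$ required by each term, the strongest demand comes from $\langle z\rangle^3 U_2$: since $U_2 = -L_0^{-1}(L_1 U_1 + L_2 U_0)$ inherits up to two $z$-derivatives and a quadratic weight on $u$, tacking on a cubic weight raises the total $\Sigma$-index to $5$; all other terms are controlled by $u(t)\in \Sigma^3$ or $\Sigma^4$. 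Invoking Proposition~\ref{prop:growth} at $k=5$ gives $\|u(t)\|_{\Sigma^5} \le C e^{Ct}$, whence $\|R^\eps(t)\|_{L^2_x} \le C\eps^{3/2} e^{Ct}$.

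For the initial error: at $t=0$ we have $\phi_m(0,x) = p_0\cdot(x-q_0)$ and $U_0(0,z,y) = u_0(z)\chi_m(y,p_0)$, so the $U_0$-part reconstructs $\psi^\eps_0$ exactly. The remaining $\sqrt\eps\, U_1(0)+\eps\,U_2(0)$ pieces contribute an $L^2$ error of size $O(\sqrt\eps)$ by the same change-of-variable argument. Feeding these two bounds into the differential inequality and integrating (Gronwall, with the constants absorbed into $C$) gives $\|w^\eps(t)\|_{L^2(\R^d)} \le C\sqrt\eps\, e^{Ct}$, which is the claim.

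The main technical hurdle is the uniform-in-$t$ control of $\|L_1 U_2(t)\|_{L^2_z L^\infty_y}$ and $\|\langle z\rangle^3 U_2(t)\|_{L^2_z L^\infty_y}$, since $U_2$ is defined implicitly through $L_0^{-1} = (E_m(p(t))-H_\Gamma(p(t)))^{-1}\bigl|_{\chi_m^\perp}$. Uniformity here rests on the spectral-gap Assumption~\ref{hyp:E}, which keeps the resolvent bounded on the orthogonal complement of $\chi_m(\cdot,p(t))$ uniformly in $t$, while the passage to $L^\infty_y$ relies on the smoothness of $V_\Gamma$ and $\chi_m$ through elliptic regularity for $H_\Gamma$ applied to the smooth-in-$y$ source $L_1 U_1 + L_2 U_0$. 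Once these uniform estimates are in place, everything else is the mechanical Gronwall closing outlined above.
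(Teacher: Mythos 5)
Your proposal is correct and follows essentially the same route as the paper: write the error equation for $w^\eps=\psi^\eps-\psi^\eps_{\rm app}$ using Lemma~\ref{lem:app}, apply the $L^2$ energy (Duhamel) estimate, bound the residual after the two-scale change of variables by $L^2_zL^\infty_y$ norms, account for the $O(\sqrt\eps)$ initial mismatch from $U_1(0),U_2(0)$, and close via Proposition~\ref{prop:app} and the exponential $\Sigma^5$ growth bound of Proposition~\ref{prop:growth} (the cubic weight on $U_2$ being exactly why $u_0\in\Sigma^5$ is required). The only cosmetic difference is that no Gronwall argument is actually needed at this stage — direct integration of the differential inequality suffices, as in the paper.
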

\begin{proof}
  First, note that $\psi^\eps$ and $\psi^\eps_{\rm app}$ do not
  coincide at time $t=0$, since elliptic inversion has forced us to
  introduce $U_1$ and $U_2$ which are not zero initially. Setting
  $w^\eps= \psi^\eps-\psi^\eps_{\rm app}$, and using \eqref{eq:NLS0} and
  Lemma~\ref{lem:app}, we see that the error solves
  \begin{align*}
    \(i\eps \d_t +\frac{\eps^2}{2}\Delta - 
  V_\Gamma 
  -V \)w^\eps&=
 -\frac{e^{i\phi_m/\eps}}{\eps^{d/4}}
 \eps^{3/2}\(r^\eps+r_1^\eps+r_2^\eps\) (t,z,y)\Big|_{(z,y)= 
    \(\frac{x-q(t)}{\sqrt\eps},\frac{x}{\eps}\) },\\
w^\eps(0,x)&= \eps^{-d/4}\(\sqrt \eps U_1 +\eps
U_2\)\(0,\frac{x-q_0}{\sqrt\eps},\frac{x}{\eps}\) e^{i\phi_m(0,x)/\eps}.
  \end{align*}
The a-priori $L^2$ estimate yields
\begin{align*}
  \|w^\eps(t)\|_{L^2}& \le \sqrt\eps \|U_1(0)\|_{L^2_zL^\infty_y} + 
\eps \|U_2(0)\|_{L^2_zL^\infty_y} \\
&\quad + \sqrt\eps\int_0^t
\(\|r^\eps(s)\|_{L^2_zL^\infty_y} +\|r_1^\eps(s)\|_{L^2_zL^\infty_y}
+\|r_2^\eps(s)\|_{L^2_zL^\infty_y} \)ds. 
\end{align*}
The assertion then follows from Lemma~\ref{lem:app} (establishing the needed 
properties for the functions $r^\eps$, $r^\eps_1$ and $r^\eps_2$),
Proposition~\ref{prop:app}, and Proposition~\ref{prop:growth}. With
this approach, we need to know that $r^\eps$ is in $L^2_z$, so $U_0,U_1$
and $U_2$ have three momenta  in $L^2_z$: in view of
Proposition~\ref{prop:app} and Proposition~\ref{prop:growth}, this
amounts to demanding $u_0\in \Sigma^5$. 
\end{proof}

This asymptotic stability result directly yields the assertion of Theorem~\ref{th:main}. 

\begin{proof}[End of the proof of Theorem~\ref{th:main}]
To conclude, it suffices to notice that 
\begin{equation*}
  \varphi^\eps(t,x) =\psi^\eps_{\rm app}(t,x) -\(\sqrt \eps U_1+\eps
  U_2\)(t,z,y) \Big|_{(z,y)= 
    \(\frac{x-q(t)}{\sqrt\eps},\frac{x}{\eps}\) },
\end{equation*}
so Proposition~\ref{prop:app} and Proposition~\ref{prop:growth} imply
\begin{equation*}
  \|\varphi^\eps(t)-\psi^\eps_{\rm app}(t)\|_{L^2}\le C\sqrt\eps e^{Ct}.
\end{equation*}
This estimate and Proposition~\ref{prop:stab} yield
Theorem~\ref{th:main}. 
\end{proof}
\begin{remark}
  The construction of the approximate solution $\psi^\eps_{\rm app}$
  has forced us to introduce non-zero correctors $U_1$ and $U_2$,
  given by elliptic inversion. Therefore, we had to consider
  well-prepared initial data for $\psi^\eps_{\rm app}$. This aspect is
  harmless as long as one is interested only in the leading order
  behavior of $\psi^\eps$ as $\eps\to 0$. As a consequence, our
  approach would not allow us to construct arbitrary accurate
  approximations for $\psi^\eps$ (in terms of powers of $\eps$),
  unless well-prepared initial data are considered, i.e. data lying in so-called 
  {\it super-adiabatic subspaces}, in the terminology of \cite{PaSpTe03}
  (after \cite{LiBe91}). This is due
  to the spectral analysis implied by the presence of the periodic
  potential $V_\Gamma$, and shows a sharp contrast with the case
  $V_\Gamma=0$.  
\end{remark}

Of course the above given stability result immediately generalizes to
situations where, instead of a single $\varphi^\eps$, a superposition
of finitely many semiclassical wave packets is considered,  
\[
\psi^\eps_0(x) = \eps^{-d/4} \sum_{n=1}^N u_{n} 
\left(\frac{x-q_{n}}{\sqrt\eps}\right) \chi_{m_n}\left(\frac{x}{\eps},
  p_{n} \right) e^{i p_n \cdot (x-q_{n})/ \eps}. 
\]
Since the underlying semiclassical Schr\"odinger equation
\eqref{eq:NLS0} is linear, each of these initial wave packets will
evolve individually from the rest, as in Theorem~\ref{th:main}. 
Up to some technical modifications, it should be possible to consider
even a {\it continuous} superposition of wave packets, yielding a
semiclassical approximation known under the name ``frozen Gaussians'',
see  
\cite{Hel81}.

\bibliographystyle{amsplain}
\bibliography{bloch}
\end{document}